\theoremstyle{plain}
\numberwithin{equation}{section}
\newtheorem{thm}{Theorem}[section]
\newtheorem{lem}[thm]{Lemma}
\newtheorem{cor}[thm]{Corollary}
\newenvironment{exam}[1]
{\begin{flushleft}\textbf{Example #1}.\enspace}
{\end{flushleft}}
\newcounter{cond}
\newcommand{\complex}{{\mathbb C}}
\newcommand{\real}{{\mathbb R}}
\newcommand{\tbullet}{\mathrel{\raise .4ex\hbox{\tiny$\bullet$}}} 
\newcommand{\rmtr}{\mathrm{tr\,}}
\newcommand{\rmin}{\mathrm{In\,}}
\newcommand{\rmob}{\mathrm{Ob\,}}
\newcommand{\cscript}{\mathcal{C}}
\newcommand{\escript}{\mathcal{E}}
\newcommand{\iscript}{\mathcal{I}}
\newcommand{\jscript}{\mathcal{J}}
\newcommand{\kscript}{\mathcal{K}}
\newcommand{\lscript}{\mathcal{L}}
\newcommand{\oscript}{\mathcal{O}}
\newcommand{\pscript}{\mathcal{P}}
\newcommand{\sscript}{\mathcal{S}}
\newcommand{\cscripthat}{\widehat{\cscript}}
\newcommand{\iscripthat}{\widehat{\iscript}}
\newcommand{\jscripthat}{\widehat{\jscript}}
\newcommand{\kscripthat}{\widehat{\kscript}}
\newcommand{\uhat}{\widehat{u}}
\newcommand{\zerohat}{\widehat{0}}
\newcommand{\iscriptbar}{\overline{\iscript}}
\newcommand{\jscriptbar}{\overline{\jscript}}
\newcommand{\ab}[1]{\left|#1\right|}
\newcommand{\brac}[1]{\left\{#1\right\}}
\newcommand{\paren}[1]{\left(#1\right)}
\newcommand{\sqbrac}[1]{\left[#1\right]}
\newcommand{\elbows}[1]{{\left\langle#1\right\rangle}}
\newcommand{\ket}[1]{{\left|#1\right>}}
\newcommand{\bra}[1]{{\left<#1\right|}}
\begin{document}

\title{SEQUENTIAL PRODUCTS OF\\QUANTUM MEASUREMENTS}
\author{Stan Gudder\\ Department of Mathematics\\
University of Denver\\ Denver, Colorado 80208\\
sgudder@du.edu}
\date{}
\maketitle

\begin{abstract}
Our basic structure is a finite-dimensional complex Hilbert space $H$. We point out that the set of effects on $H$ form a convex effect algebra. Although the set of operators on $H$ also form a convex effect algebra, they have a more detailed structure. W introduce sequential products of effect and operations. Although these have already been studied, we introduce the new concept of sequential products of effects with operations and operations with effects. We then consider various special types of operations. After developing properties of these concepts, the results are generalized to include observables and instruments. In particular, sequential products of observables with instruments and instruments with observables are developed. Finally, we consider conditioning and coexistence of observables and instruments.
\end{abstract}

\section{Basic Definitions}  
Let $H$ be a finite-dimensional complex Hilbert space.The basic concepts of the present version of quantum measurement theory are the sets of effects $\escript (H)$, states $\sscript (H)$ and operations $\oscript (H)$ on $H$ \cite{bgl95,hz12,kra83,nc00}. Denoting the set of linear operators on $H$ by
$\lscript (H)$, we define $\escript (H)=\brac{a\in\lscript (H)\colon 0\le a\le I}$ where $0,I$ are the zero and identity operators, respectively. An effect
$a\in\escript (H)$ corresponds to a two-valued $yes-no$ experiment. The effect $a'=I-a$ has the value $yes$ if $a$ has the value $no$ and we call $a'$ the \textit{complement} of $a$. If $a,b\in\escript (H)$ and $a+b\in\escript (H)$, then we write $a\perp b$. When $a\perp b$, we interpret
$a+b\in\escript (H)$ as the parallel sum of the effects $a$ and $b$. The effect $0$ always has value $no$ and $I$ always has value $yes$. The fact that $a\perp a'$ and $a+a'=I$ indicates that either $a$ or $a'$ has the value $yes$ but not both. An element $\rho\in\escript (H)$ that satisfies
$\rmtr (\rho )=1$ is called a \textit{state}. States describe the initial condition of a quantum system. If $a\in\escript (H)$ and $\rho\in\sscript (H)$, then the \textit{probability} that $a$ has value $yes$ when the system is in state $\rho$ is given by $\pscript _\rho (a)=\rmtr (\rho a)$. It is clear that
$\pscript _\rho (0)=0$, $\pscript _\rho (I)=1$ and if $a\perp b$, then $\pscript _\rho (a+b)=\pscript _\rho (a)+\pscript _\rho (b)$.

A linear map $\iscript\colon\lscript (H)\to\lscript (H)$ is \textit{completely positive}  if $\iscript\otimes I\colon H\otimes K\to H\otimes K$ is positive for every auxiliary finite dimensional Hilbert space $K$. We call $\iscript\colon\lscript (H)\to\lscript (H)$ an \textit{operation} if $\iscript$ is completely positive and $\rmtr\sqbrac{\iscript (\rho )}\le\rmtr (\rho )$ for every positive $\rho\in\lscript (H)$ \cite{bgl95,hz12,kra83,nc00}. An operation $\iscript$ is called a \textit{channel} if $\rmtr\sqbrac{\iscript (\rho )}=\rmtr (\rho )$ for every positive $\rho\in\lscript (H)$. In particular, if $\rho\in\sscript (H)$ then
$\rmtr\sqbrac{\iscript (\rho )}=1$. Any $\iscript\in\oscript (H)$ has a \textit{Kraus decomposition} $\iscript (\rho )=\sum\limits _{i=1}^nA_i\rho A_i^*$ where $A_i\in\lscript (H)$, $i=1,2,\ldots ,n$ \cite{bgl95,hz12,kra83,nc00}. The Kraus operators $A_i$ need not be unique. Since
$\rmtr\sqbrac{\iscript (\rho )}\le\rmtr (\rho )$ we have that
\begin{equation}                
\label{eq11}
\rmtr (\rho\sum A_i^*A_i)=\sum\rmtr (\rho A_i^*A_i)\!=\!\sum\rmtr (A_i\rho A_i^*)=\rmtr\sqbrac{\iscript (\rho )}\le\rmtr (\rho I)
\end{equation}
for every positive $\rho$. It follows that $\sum A_i^*A_i\le I$. An operation is thought of as an apparatus $\iscript$ that can be employed to measure an effect $\iscripthat$. We define the \textit{probability} that $\iscript$ has value $yes$ in the state $\rho$ to be
$\pscript _\rho (\iscript )=\rmtr\sqbrac{\iscript (\rho )}$ and $\iscripthat$ is the unique effect satisfying $\rmtr (\rho\iscripthat\,)=\pscript _\rho (\iscript )$ for all $\rho\in\sscript (H)$ and say that $\iscript$ \textit{measures} $\iscripthat$. Although every $\iscript\in\oscript (H)$ measures a unique
$\iscripthat\in\escript (H)$, an effect is measured by many operations. That is, many apparatuses can be employed to measure an effect $a$. Moreover,
$\iscript$ gives more information than $\iscripthat$ because $\iscript (\rho )/\rmtr\sqbrac{\iscript (\rho )}$ is the updated state after $\iscript$
(or $\iscripthat$) is measured (assuming $\rmtr\sqbrac{\iscript (\rho )}\ne 0$). One way to specify $\iscripthat$ is the following: If $\iscript$ has Kraus decomposition $\iscript (\rho )=\sum A_i\rho A_i^*$, then by \eqref{eq11} we have that $\iscripthat =\sum A_i^*A_i$. This also shows that if
$\iscript (\rho )=\sum B_j\rho B_j^*$ is another Kraus decomposition for $\iscript$, then $\sum B_j^*B_j=\sum A_i^*A_i$.

We now consider the mathematical structures of $\escript (H)$ and $\oscript (H)$. Let $(V,\le )$ be a finite-dimensional ordered real linear space
\cite{hz12,nc00} and let $u\in V$ satisfy $u>0$. Letting $E$ be the ordered interval
\begin{equation*} 
E=\sqbrac{0,u}=\brac{x\in V\colon 0\le x\le u}
\end{equation*}
we call $(E,0,u)$ a \textit{convex effect algebra} \cite{gud320}. For $x,y\in E$ we write $x\perp y$ if $x+y\in E$. It is easy to check that $E$ satisfies the effect algebra axioms:
\begin{list} {(\arabic{cond})}{\usecounter{cond}
\setlength{\rightmargin}{\leftmargin}}
\item If $x\perp y$, then $y\perp x$ and $x+y=y+x$.
\item If $y\perp z$ and $x\perp (y+z)$, then $x\perp y$, $z\perp (x+y)$ and we have $x+(y+z)=(x+y)+z$.
\item For every $x\in E$ there exists a unique $x'\in E$ such that $x\perp x'$ and $x+x'=u$.
\item If $x\perp u$, then $x=0$.
\end{list}
Moreover, $E$ is \textit{convex} because if $x_1,\ldots ,x_n\in E$ and $\lambda _i\ge 0$ with $\sum\lambda _i=1$, then $\sum\lambda _ix_i\in E$. Indeed, we have that
\begin{equation*} 
0\le \sum\lambda _ix_i\le\sum\lambda _iu=u
\end{equation*}
We call the element $x'$ in (3) the \textit{complement} of $x$. If $E=\sqbrac{0,u}$ and $F=\sqbrac{0,v}$ are convex effect algebras, a map $J\colon E\to F$ is an \textit{isomorphism} if $J$ is an order-preserving bijection, $J(u)=v$, $J(x+y)=J(x)+J(y)$ whenever $x\perp y$ and
$J(\sum\lambda _ix_i)=\sum\lambda _iJ(x_i)$ whenever $\lambda _i\ge 0$, $\sum\lambda _i=1$.

Notice that $\escript (H)$ is a convex effect algebra because $\escript (H)=\sqbrac{0,I}\subseteq\lscript _\real (H)$ where $\lscript _\real (H)$ is the real linear span of $\escript (H)$. In a similar way, letting $V$ be the real linear span of $\oscript (H)$ we have that $\oscript (H)=\sqbrac{0,I}\subseteq V$ where $0$ and $I$ are the zero and identity operations, respectively, so $\oscript (H)$ is also a convex effect algebra. However, this convex effect algebra does not specify the detailed probabilistic structure of $\oscript (H)$. Although $I\in\escript (H)$ is the unique certain effect, the identity
$I\in\oscript (H)$ is not the unique certain operation. If $\cscript$ is a channel, then for any $\rho\in\sscript (H)$ we have that
$\pscript _\rho (\cscript )=\rmtr\sqbrac{\cscript (\rho )}=1$ so $\cscript$ has value $yes$ for any state so $\cscript$ is certainly $yes$. Similarly, if
$\iscript ,\jscript\in\oscript (H)$ and $\iscripthat =\jscripthat$ then
\begin{equation*} 
\pscript _\rho (\iscript )=\rmtr\sqbrac{\iscript (\rho )}=\rmtr (\rho\iscripthat\,)=\rmtr (\rho\jscripthat\,)=\rmtr\sqbrac{\jscript (\rho )}=\pscript _\rho (\jscript )
\end{equation*}
for every $\rho\in\sscript (H)$ so we cannot distinguish $\iscript$ and $\jscript$ probabilistically. For $\iscript ,\jscript\in\oscript (H)$ we write
$\iscript\approx\jscript$ if $\iscripthat =\jscripthat$. It is clear that $\approx$ is an equivalence relation in $\oscript (H)$. We denote the equivalence class containing $\iscript$ by $\sqbrac{\iscript}$ and the set of equivalence classes by $\oscript (H)/\approx$. We write
$\sqbrac{\iscript}^\wedge =\iscripthat$ and it is clear that this is well-defined.

\begin{thm}    
\label{thm11}
$\oscript (H)/\negthickspace\approx\ $ is a convex effect algebra and ${\ }^\wedge\colon\oscript (H)/\negthickspace\approx\,\to\escript (H)$ is an isomorphism.
\end{thm}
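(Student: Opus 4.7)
The plan is to exhibit $^\wedge$ as a bijection between $\oscript(H)/\negthickspace\approx$ and $\escript(H)$, and then transport the convex effect algebra structure from $\escript(H)$ to $\oscript(H)/\negthickspace\approx$ along this bijection. Injectivity of $^\wedge$ is built into the definition of $\approx$: by construction $\iscript\approx\jscript$ iff $\iscripthat=\jscripthat$. For surjectivity, given any $a\in\escript(H)$, the L\"uders operation $\iscript_a(\rho)=a^{1/2}\rho a^{1/2}$ is completely positive with Kraus operator $a^{1/2}$ and satisfies $\rmtr\sqbrac{\iscript_a(\rho)}=\rmtr(a\rho)\le\rmtr(\rho)$, so $\iscript_a\in\oscript(H)$ and $\iscripthat_a=a$; hence $\sqbrac{\iscript_a}^\wedge=a$.

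Next I would define operations on $\oscript(H)/\negthickspace\approx$ through representatives: $\sqbrac{\iscript}\perp\sqbrac{\jscript}$ when $\iscript+\jscript\in\oscript(H)$, and then $\sqbrac{\iscript}+\sqbrac{\jscript}:=\sqbrac{\iscript+\jscript}$; similarly $\sum\lambda_i\sqbrac{\iscript_i}:=\sqbrac{\sum\lambda_i\iscript_i}$ for $\lambda_i\ge 0$ with $\sum\lambda_i=1$ (noting that such a convex combination is again an operation). The unit is $u:=\sqbrac{I}$, which as already observed in the paper coincides with the set of all channels. Well-definedness of these operations reduces to the two identities
\begin{equation*}
(\iscript+\jscript)^\wedge=\iscripthat+\jscripthat,\qquad \paren{\sum\lambda_i\iscript_i}^\wedge=\sum\lambda_i\iscripthat_i,
\end{equation*}
both of which follow from linearity of the trace applied to the defining relation $\rmtr(\rho\iscripthat\,)=\rmtr\sqbrac{\iscript(\rho)}$ at every $\rho\in\sscript(H)$, together with the fact that an effect is determined by its expectations on states.

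Now take $V=\lscript _\real(H)$ with the operator order and distinguished unit $I$, so that $\escript(H)=\sqbrac{0,I}$ is the convex effect algebra already recorded in the paper. The bijection $^\wedge$ intertwines the operations just defined on $\oscript(H)/\negthickspace\approx$ with those on $\escript(H)$, so pulling back the order, the unit, and the linear structure along $^\wedge$ makes $\oscript(H)/\negthickspace\approx$ into a convex effect algebra for which $^\wedge$ is tautologically an isomorphism. The axioms (1)--(4) and convexity therefore need not be re-verified on the quotient; they are inherited from the known structure on $\escript(H)$.

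The main obstacle is the well-definedness of $+$ on equivalence classes: if $\iscript_1\approx\iscript_2$ and $\jscript_1\approx\jscript_2$, then why is $\iscript_1+\jscript_1\approx\iscript_2+\jscript_2$? This is exactly the work done by the identity $(\iscript+\jscript)^\wedge=\iscripthat+\jscripthat$, which reduces the question to $\iscripthat_1+\jscripthat_1=\iscripthat_2+\jscripthat_2$, holding by hypothesis. Once that identity is established, every remaining verification either follows from trace linearity or is transported along the bijection.
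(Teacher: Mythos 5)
Your proof is correct, and its computational core coincides with the paper's: both define $\sqbrac{\iscript}+\sqbrac{\jscript}=\sqbrac{\iscript +\jscript}$ and $\sum\lambda _i\sqbrac{\iscript _i}=\sqbrac{\sum\lambda _i\iscript _i}$ on representatives, and both rest on the identities $(\iscript +\jscript )^\wedge =\iscripthat +\jscripthat$ and $\paren{\sum\lambda _i\iscript _i}^\wedge =\sum\lambda _i\iscripthat _i$. Where you genuinely differ is in how the convex effect algebra structure is certified. The paper establishes it intrinsically, asserting (without detail) that $\oscript (H)/\negthickspace\approx$ is the order interval $\sqbrac{0,u}$ in the real ordered linear space spanned by the quotient, and only afterwards shows ${\ }^\wedge$ is an isomorphism; moreover, in the proof it only promises surjectivity, the witnessing operation (L\"uders or trivial) being produced in Section 2. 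You instead prove bijectivity up front---with the explicit L\"uders witness $\iscript _a(\rho )=a^{1/2}\rho a^{1/2}$, $\iscripthat _a=a$---and then transport the known structure of $\escript (H)=\sqbrac{0,I}\subseteq\lscript _\real (H)$ backwards along ${\ }^\wedge$, so that axioms (1)--(4) and convexity are inherited rather than re-verified. This buys you two things: it avoids the paper's unexamined ``linear span of a quotient set'' construction, and it makes surjectivity self-contained; the only caveat is that under the paper's concrete definition (a convex effect algebra is an interval in an ordered linear space), your transport should be read as identifying the quotient with the interval $\sqbrac{0,I}$ via ${\ }^\wedge$, which is exactly the content of the isomorphism claim anyway. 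One point that you, like the paper, leave tacit: the relation $\perp$ must itself be representative-independent, i.e.\ if $\iscript _1\approx\iscript _2$, $\jscript _1\approx\jscript _2$ and $\iscript _1+\jscript _1\in\oscript (H)$, then $\iscript _2+\jscript _2\in\oscript (H)$. This follows from your own reduction, since a completely positive map is an operation exactly when its associated effect is $\le I$, and $(\iscript _2+\jscript _2)^\wedge =(\iscript _1+\jscript _1)^\wedge$---worth one sentence, but not a gap.
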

\begin{proof}
We write $0\!=\!\brac{0}\!=\!\sqbrac{0}$ and $u=\sqbrac{\cscript}$ where $\cscript$ is a channel and we have that $\zerohat\!=\!0$ and $\uhat =I$. If
$\iscript ,\jscript\in\oscript (H)$ with $\iscript\perp\jscript$ we define $\sqbrac{\iscript}+\sqbrac{\jscript}=\sqbrac{\iscript +\jscript}$ and if 
$\iscript _i\in\oscript (H)$, $\lambda _i\ge 0$ with $\sum\lambda _i=1$, $i=1,2,\ldots ,n$ we define
\begin{equation*}
\sum _i\lambda _i\sqbrac{\iscript _i}=\sqbrac{\sum\lambda _i\iscript _i}
\end{equation*}
Moreover, we write $\sqbrac{\iscript}\le\sqbrac{\jscript}$ if $\iscripthat\le\jscripthat$. It is clear that these are well-defined. We then have that
$0\le\sqbrac{\iscript}\le u$ for all $\iscript\in\oscript (H)$. It is straightforward to show that $\oscript (H)/\negthickspace\approx\,=\sqbrac{0,u}$ is an order interval for a real ordered linear space $V$ consisting of the linear span of $\oscript (H)/\negthickspace\approx$ and hence is a convex effect algebra. Since
$\sqbrac{\iscript +\jscript}^\wedge =\iscripthat +\jscripthat$ and
\begin{equation*}
\paren{\sum\lambda _i\sqbrac{\iscript _i}}^\wedge =\paren{\sum\lambda _i\iscript _i}^\wedge =\sum\lambda _i\iscripthat _i
\end{equation*}
we have that ${\ }^\wedge$ preserves $+$ and convex combinations. Clearly, $\uhat =I$ and ${\ }^\wedge$ is order-preserving. If $a\in\escript (H)$ we will show there exists an $\iscript\in\oscript (H)$ such that $\iscripthat =a$ so ${\ }^\wedge$ is surjective. To show that ${\ }^\wedge$ in injective, suppose that $\sqbrac{\iscript}^\wedge =\sqbrac{\jscript}^\wedge$. Then $\iscripthat =\jscripthat$ so $\sqbrac{\iscript}=\sqbrac{\jscript}$. Hence,
${\ }^\wedge\colon\oscript (H)/\negthickspace\approx\,\to\escript (H)$ is an isomorphism.
\end{proof}

Besides sums and convex combinations, there is another important way of combining effects and operations. For $a,b\in\escript (H)$ we define their \textit{sequential product} $a\circ b=a^{1/2}ba^{1/2}$ where $a^{1/2}$ is the unique positive square-root of $a$ \cite{gg02,gn01}. It is easy to check that $a\circ b\le a$ so we indeed have that $a\circ b\in\escript (H)$. We interpret $a\circ b$ to be the effect resulting from first measuring $a$ and then measuring $b$. Then $\escript (H)$ has the mathematical structure of a convex, sequential effect algebra \cite{gud220,gud320}. Because of the sequential order of $a\circ b$, the measurement of $a$ can influence (interfere) with the measurement of $b$ but not vice versa. This is emphasized by the fact that $a\circ (b+c)=a\circ b+a\circ c$ and $a\circ\paren{\sum\lambda _ib_i}=\sum\lambda _ia\circ b_i$. We then say that $a\circ b$ is
\textit{additive} and \textit{affine} in the second variable. In general, $a\circ b$ is not additive or affine in the first variable. Also, it can be shown that
$a\circ b=b\circ a$ if and only if $ab=ba$ \cite{gn01}. In a similar way, for $\iscript ,\jscript\in\oscript (H)$ we define their
\textit{sequential product} $\iscript\circ\jscript (\rho )=\jscript\paren{\iscript (\rho )}$ \cite{gud120,gud320}. Although we still retain the same influence interpretation, $\iscript\circ\jscript$ is additive and affine in both variables. However, $\iscript\circ\jscript\ne\jscript\circ\iscript$, in general. We say that
$a\in\escript (H)$ is \textit{sharp} if $a$ is a projection. It can be shown that $a$ is sharp if and only if $a\wedge a'=0$. That is if $b\le a,a'$ then $b=0$. An effect $a$ is \textit{atomic} if $a$ is a one-dimensional projection. The next result pertains to additivity and commutativity of sharp and atomic effects.

\begin{thm}    
\label{thm12}
{\rm{(i)}}\enspace If $a_1,a_2,\ldots ,a_n$ are sharp and $\sum a_i=I$, then $b=\sum a_i\circ b$ if and only if $ba_i=a_ib$, $i=1,2,\ldots ,n$.
{\rm{(ii)}}\enspace If $a$ and $b$ are atomic and have the form $a=\ket{\phi}\bra{\phi}$, $b=\ket{\psi}\bra{\psi}$, then
$\pscript _\rho (a\circ b)=\pscript _\rho (b\circ a)$ if and only if $\elbows{\phi ,\rho\phi}=\elbows{\psi ,\rho\psi}$ or $ab=0$.
\end{thm}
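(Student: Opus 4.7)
The plan is to treat the two parts separately by direct computation, exploiting the fact that for a projection $p$ one has $p\circ x=pxp$, and for a rank-one projection $|\phi\rangle\bra{\phi}$ the sequential product with another rank-one projection reduces to a scalar multiple.

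For (i), I would first observe that the hypotheses force the $a_i$ to be pairwise orthogonal projections. Indeed, if $a_i$ and $a_j$ (with $i\neq j$) are sharp and $a_i+a_j\le I$, then $a_i\le I-a_j$, and evaluating on a vector $v$ in the range of $a_j$ forces $\langle a_i v,v\rangle=0$ and hence $a_ia_j=0$. Hence the $a_i$ form a projection-valued resolution of the identity. Using $a_i^{1/2}=a_i$, the condition $b=\sum a_i\circ b$ becomes $b=\sum a_ib a_i$. The reverse direction is then immediate: if $ba_i=a_ib$, then $a_iba_i=a_i^2 b=a_ib$, and summing gives $b=(\sum a_i)b=b$. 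For the forward direction, given $b=\sum_i a_iba_i$, left-multiply by $a_j$ and use $a_ja_i=\delta_{ji}a_j$ to obtain $a_jb=a_jba_j$; symmetrically, right-multiplying gives $ba_j=a_jba_j$, so $a_jb=ba_j$.

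For (ii), I would compute $a\circ b$ and $b\circ a$ explicitly. Since $a=|\phi\rangle\bra{\phi}$ is a projection, $a\circ b=aba=|\phi\rangle\bra{\phi}\psi\rangle\bra{\psi}\phi\rangle\bra{\phi}=|\langle\phi,\psi\rangle|^2\,|\phi\rangle\bra{\phi}$, and similarly $b\circ a=|\langle\phi,\psi\rangle|^2\,|\psi\rangle\bra{\psi}$. Taking probabilities in the state $\rho$ gives
\begin{equation*}
\pscript_\rho(a\circ b)-\pscript_\rho(b\circ a)=|\langle\phi,\psi\rangle|^2\bigl(\langle\phi,\rho\phi\rangle-\langle\psi,\rho\psi\rangle\bigr).
\end{equation*}
This vanishes precisely when $\langle\phi,\psi\rangle=0$ or $\langle\phi,\rho\phi\rangle=\langle\psi,\rho\psi\rangle$. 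The former is equivalent to $ab=0$, since $ab=\langle\phi,\psi\rangle\,|\phi\rangle\bra{\psi}$.

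The only genuinely non-routine point is the orthogonality observation used in (i); the rest is a matter of unwinding definitions. Once one records that a projection-valued partition of the identity is automatic from the sharpness hypothesis plus $\sum a_i=I$, the computations in both parts reduce to straightforward manipulations of projection products.
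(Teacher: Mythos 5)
Your proposal is correct and takes essentially the same route as the paper: for (i), both arguments first show that sharpness together with $\sum a_i=I$ forces pairwise orthogonality $a_ia_j=0$ and then multiply $b=\sum a_iba_i$ by $a_j$ on the left and right to get $a_jb=a_jba_j=ba_j$; for (ii), both reduce to the same rank-one computation producing the common factor $\ab{\elbows{\phi,\psi}}^2$. The only cosmetic difference is how orthogonality is obtained (you use $a_i\le I-a_j$ evaluated on vectors in the range of $a_j$, while the paper sandwiches $\sum a_i=I$ between $a_j$'s and uses positivity of the terms $a_ja_ia_j$), but this is a minor variant of the same step.
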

\begin{proof}
(i)\enspace If $ba_i=a_ib$, $i=1,2,\ldots ,n$, then
\begin{equation*}
\sum a_i\circ b=\sum a_iba_i=\sum a_ib=b
\end{equation*}
Conversely, suppose that $b=\sum a_i\circ b=\sum a_iba_i$. Since $\sum a_i=I$, we obtain
\begin{equation*}
a_j+\sum _{i\ne j}a_ja_ia_j=a_j
\end{equation*}
Hence, $\sum\limits _{i\ne j}a_ja_ia_j=0$ so that $a_ja_ia_j=0$, $i\ne j$. Therefore,
\begin{equation*}
(a_ja_i)(a_ja_i)^*=a_ja_ia_ia_j=a_ja_ia_j=0
\end{equation*}
It follows that $a_ja_i=0$ for $i\ne j$. But then
\begin{equation*}
a_jb=a_jba_j=ba_j
\end{equation*}
for $j=1,2,\ldots ,n$.
(ii)\enspace We have that $\pscript _\rho (a\circ b)=\pscript _\rho (b\circ a)$ if and only if
\begin{equation*}
\rmtr\paren{\rho\ket{\phi}\bra{\phi}\,\ket{\psi}\bra{\psi}\,\ket{\phi}\bra{\phi}}=\rmtr\paren{\rho\ket{\psi}\bra{\psi}\,\ket{\phi}\bra{\phi}\,\ket{\psi}\bra{\psi}}
\end{equation*}
This is equivalent to
\begin{align*}
\ab{\elbows{\phi ,\psi}}^2\elbows{\phi ,\rho\phi}&=\ab{\elbows{\phi ,\psi}}^2\rmtr\paren{\rho\ket{\phi}\bra{\phi}}
  =\ab{\elbows{\phi ,\psi}}^2\rmtr\paren{\rho\ket{\psi}\bra{\psi}}\\
  &=\ab{\elbows{\phi ,\psi}}^2\elbows{\psi ,\rho\psi}
\end{align*}
Since $\elbows{\phi ,\psi}=0$ if and only if $ab=0$, then the result follows.
\end{proof}

It is easy to check that $a\in\escript (H)$ is atomic if and only if $b\le a$ implies that $b=\lambda a$ for some $\lambda\in\sqbrac{0,1}$.

\section{Bayes' Rules}  
First note that $\pscript _\rho$ has the usual properties of a probability measure on $\escript (H)$. That is, $0\le\pscript _\rho (a)\le 1$ for all
$a\in\escript (H)$, $\pscript _\rho (I)=1$ and if $a\perp b$, then $\pscript _\rho (a+b)=\pscript _\rho (a)+\pscript (b)$. This last equation shows that 
$a\perp b$ is the analogue of disjointness of events in classical probability theory. For $a,b\in\escript (H)$, it is natural to define the
\textit{conditional probability}
\begin{equation*} 
\pscript _\rho\paren{b\mid a}=\frac{\pscript _\rho (a\circ b)}{\pscript _\rho (a)}=\frac{\rmtr\paren{\rho a^{1/2}ba^{1/2}}}{\rmtr (\rho a )}
\end{equation*}
whenever $\pscript _\rho (a)\ne 0$. Although $\pscript _\rho\paren{\tbullet\mid a}$ satisfies the above three conditions for a probability measure, it does not satisfy Bayes' rules. Bayes' first rule says that if $a_1+a_2+\cdots +a_n=I$, then for all $b\in\escript (H)$ we have that
\begin{equation}                
\label{eq21}
\pscript _\rho (b)=\sum _{i=1}^n\pscript (a_i)\pscript _\rho (b\mid a_i)
\end{equation}
If \eqref{eq21} holds for all $\rho\in\sscript (H)$ it follows that $b=\sum a_i\circ b$. But according to Theorem~\ref{thm12}(i) this does not hold, in general.

Bayes' second rule says that
\begin{equation}                
\label{eq22}
\pscript _\rho (b)\pscript _\rho (a\mid b)=\pscript _\rho (a)\pscript _\rho (b\mid a)
\end{equation}
Now \eqref{eq22} is equivalent to $\pscript _\rho (a\circ b)=\pscript _\rho (b\circ a)$. If this holds for all $\rho\in\sscript (H)$, it follows that
$ab=ba$. Hence, Bayes' second rule does not hold, in general. When $a$ and $b$ are atomic, Theorem~\ref{thm12}(ii) characterizes the $\rho\in\lscript (H)$ for which Bayes' second rule holds.

In a similar way, for $\iscript ,\jscript\in\oscript (H)$ we have the \textit{conditional probability}
\begin{equation*} 
\pscript _\rho(\jscript\mid\iscript)=\frac{\pscript _\rho (\iscript\circ\jscript )}{\pscript _\rho (\iscript )}=
  \frac{\rmtr\sqbrac{\jscript\paren{\iscript (\rho )}}}{\rmtr\sqbrac{\iscript (\rho )}}
\end{equation*}
whenever $\pscript _\rho (\iscript )\ne 0$. As before, we write $\iscript\perp\jscript$ if $\iscript +\jscript\in\oscript (H)$. We then obtain,
$0\le\pscript _\rho (\jscript\mid\iscript )\le 1$, $\pscript _\rho (I\mid\iscript )=1$ and more generally, if $\jscript$ is a channel, then
$\pscript _\rho (\jscript\mid\iscript )=1$. Also, if $\jscript\perp\kscript$, then
$\pscript _\rho\paren{(\jscript +\kscript )\mid\iscript}=\pscript _\rho (\jscript\mid\iscript )+\pscript _\rho (\kscript\mid\iscript )$. Similar to
$\escript (H)$, Bayes' first rule says that if $\iscript _1+\iscript _2+\cdots +\iscript _n=\cscript$, where $\cscript$ is a channel, then
\begin{equation}                
\label{eq23}
\pscript (\jscript )=\sum _{i=1}^n\pscript _\rho (\iscript _i)\pscript _\rho (\jscript\mid\iscript _i)
\end{equation}
Notice that we used an arbitrary channel instead of just the trivial channel $I(\rho )=\rho$. The reason for this is that $\cscript$ corresponds to a certain event because $\pscript _\rho (\cscript )=\rmtr\sqbrac{\cscript (\rho )}=1$ for all $\rho\in\sscript (H)$. Now \eqref{eq23} is equivalent to
\begin{align}                
\label{eq24}
\rmtr\sqbrac{\jscript (\rho )}&=\pscript _\rho (\jscript )=\sum _{i=1}^n\pscript _\rho\sqbrac{\iscript _i\circ\jscript (\rho )}
   =\sum _{i=1}^n\pscript _\rho\sqbrac{\jscript\paren{\iscript _i(\rho )}}\notag\\
   &=\pscript _\rho\sqbrac{\jscript\paren{\sum _{i=1}^n\iscript _i(\rho )}}
   =\pscript _\rho\sqbrac{\jscript\paren{\cscript (\rho )}}=\rmtr\sqbrac{\jscript\paren{\cscript (\rho )}}
\end{align}
If $\cscript =I$ the trivial channel, then clearly \eqref{eq24} holds. However as we shall see in later examples, \eqref{eq24} does not hold in general.

For the purpose of examples and to better understand the structure of operations, we now consider some special types of operations. An operation
$\iscript$ is \textit{sharp} if it has the form $\iscript (\rho )=\sum a_i\rho a_i$ where $a_i$ are projections and $\iscript$ is \textit{atomic} if it is sharp and the $a_i$ are one-dimensional projections. As in the proof of Theorem~\ref{thm12}(i), if $\iscript (\rho )=\sum a_i\rho a_i$ is sharp then $a_ia_j=0$ for
$i\ne j$ so $\brac{a_i}$ are mutually orthogonal projections. We say that $\iscript\in\oscript (H)$ is \textit{Kraus} \cite{kra83} if
$\iscript (\rho )=A\rho A^*$ for some $A\in\lscript (H)$ with $A^*A\le I$ and $\iscript$ is \textit{L\"uders} \cite{lud51} if
$\iscript (\rho )=a\circ\rho =a^{1/2}\rho a^{1/2}$ for some $a\in\escript (H)$. An operation $\iscript$ is \textit{semi-trivial} if it has the form
$\iscript (\rho )=\sum\rmtr (\rho a_i)\alpha _i$ where $\alpha _i\in\sscript (H)$ and $a_i\in\escript (H)$ with $\sum a_i\le I$. Notice that this
$\iscript$ is indeed an operation because
\begin{equation*} 
\rmtr\sqbrac{\iscript (\rho )}=\rmtr\paren{\rho\sum a_i}\le\rmtr (\rho )
\end{equation*}
To be specific, we say that $\iscript$ is semi-trivial with states $\alpha _i$ and effects $a_i$, $i=1,2,\ldots ,n$. An operation $\iscript$ is \textit{trivial} if
$\iscript$ is semi-trivial with one state $\alpha$ and one effect $a$. In this case $\iscript (\rho )=\rmtr (\rho a)\alpha$ for all $\rho\in\sscript (H)$
\cite{hz12}.

If $\psi\in H$ is a unit vector we denote its corresponding projection operator by $P_\psi =\ket{\psi}\bra{\psi}$.

\begin{lem}    
\label{lem21}
An operation $\iscript$ is atomic if and only if $\iscript$ is semi-trivial with one-dimensional states $P_{\psi _i}$ and corresponding effects
$P_{\psi _i}$
\end{lem}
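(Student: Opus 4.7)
The plan is to prove both directions by a one-line computation resting on the following trivial identity for a one-dimensional projection $P_\psi = \ket{\psi}\bra{\psi}$: for any $\rho \in \lscript(H)$,
\begin{equation*}
P_\psi \rho P_\psi = \ket{\psi}\bra{\psi}\rho\ket{\psi}\bra{\psi} = \elbows{\psi,\rho\psi} P_\psi = \rmtr(\rho P_\psi)\, P_\psi.
\end{equation*}
Once this identity is in hand, there is essentially nothing left to do; each direction is a term-by-term rewriting of the definition.

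For the forward implication, I would start from the definition: $\iscript$ atomic means $\iscript(\rho) = \sum_{i=1}^n P_{\psi_i}\rho P_{\psi_i}$, where each $P_{\psi_i}$ is a one-dimensional projection. Applying the identity above to every summand yields $\iscript(\rho) = \sum_{i=1}^n \rmtr(\rho P_{\psi_i}) P_{\psi_i}$, which is exactly the semi-trivial form with states $\alpha_i = P_{\psi_i}$ and effects $a_i = P_{\psi_i}$. The required constraint $\sum P_{\psi_i} \le I$ for the semi-trivial description to be an operation is inherited directly from the Kraus inequality $\sum A_i^*A_i \le I$ of equation \eqref{eq11}, applied with $A_i = P_{\psi_i}$ (and is also consistent with the earlier remark that the $P_{\psi_i}$ in a sharp operation are mutually orthogonal).

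For the reverse implication, suppose $\iscript(\rho) = \sum_{i=1}^n \rmtr(\rho P_{\psi_i}) P_{\psi_i}$ is semi-trivial with one-dimensional states and effects both equal to $P_{\psi_i}$. Reading the identity in the other direction gives $\rmtr(\rho P_{\psi_i}) P_{\psi_i} = P_{\psi_i}\rho P_{\psi_i}$, hence $\iscript(\rho) = \sum_{i=1}^n P_{\psi_i}\rho P_{\psi_i}$. This is of the form $\sum a_i \rho a_i$ with the $a_i = P_{\psi_i}$ projections, so $\iscript$ is sharp; and since the $P_{\psi_i}$ are one-dimensional, $\iscript$ is atomic by definition.

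I do not anticipate any real obstacle. The only mild subtlety is bookkeeping with the semi-trivial side-condition $\sum a_i \le I$, but this is automatic from the Kraus inequality in the ``only if'' direction and built into the hypothesis in the ``if'' direction.
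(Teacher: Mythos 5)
Your proof is correct and takes essentially the same approach as the paper: the paper's proof consists precisely of the chain of equalities $P_{\psi_i}\rho P_{\psi_i}=\elbows{\psi_i,\rho\psi_i}P_{\psi_i}=\rmtr (\rho P_{\psi_i})P_{\psi_i}$ (its equation \eqref{eq25}), read as an equivalence in both directions. Your additional bookkeeping on the side condition $\sum P_{\psi_i}\le I$ via the Kraus inequality \eqref{eq11} is a minor point the paper leaves implicit, not a difference in method.
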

\begin{proof}
We have that $\iscript$ is atomic if and only if there exists one-dimensional projections $P_{\psi _i}$ such that
$\iscript (\rho )=\sum P_{\psi _i}\rho P_{\psi _i}$. But this is equivalent to
\begin{equation}                
\label{eq25}
\iscript (\rho )=\sum\ket{\psi _i}\bra{\psi _i}\rho\ket{\psi _i}\bra{\psi _i}=\!\sum\elbows{\psi _i,\rho\psi _i}\ket{\psi _i}\bra{\psi _i}
  =\sum\rmtr (\rho P_{\psi _i})P_{\psi _i}
\end{equation}
Now \eqref{eq25} is equivalent to $\iscript$ being semi-trivial with states $P_{\psi _i}$ and corresponding effects $P_{\psi _i}$.
\end{proof}

Although the Kraus operators for Kraus and L\"uders operations are obvious, this is not clear at all for semi-trivial or even trivial operations. This is treated in the next result.

\begin{thm}    
\label{thm22}
Let $\iscript (\rho )=\sum\limits _{i=1}^n\rmtr (\rho a_i)\alpha _i$ be semi-trivial and let $\alpha _i$ have the spectral representation
\begin{equation}                
\label{eq26}
\alpha _i=\sum _{j=1}^{n_i}\lambda _{ij}\ket{\phi _{ij}}\bra{\phi _{ij}}
\end{equation}
for $i=1,2,\ldots ,n$. Letting $A_{ijk}=\lambda _{ij}^{1/2}\ket{\phi _{ij}}\bra{a_i^{1/2}\phi _{ik}}$, we have that $\brac{A_{ijk}}$ is a set of Kraus operators for $\iscript$, $i=1,2,\ldots ,n$, $j,k=1,2,\ldots ,n_i$. Moreover, we have that $\iscripthat =\sum\limits _{i=1}^na_i$.
\end{thm}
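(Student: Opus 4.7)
The plan is to verify both assertions by direct computation from the defining formulas. The only point requiring care is the interpretation of the spectral representation \eqref{eq26}: I will take $\{\phi_{ij}\}_{j=1}^{n_i}$ to be an orthonormal basis of $H$, allowing $\lambda_{ij}=0$ for some indices so that the completeness relation $\sum_j\ket{\phi_{ij}}\bra{\phi_{ij}}=I$ is available. (The $A_{ijk}$ corresponding to zero eigenvalues are themselves zero, so this convention does not alter the claim.)

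For the Kraus-decomposition part, I would first compute a single summand. Using the rank-one form of $A_{ijk}$,
\begin{equation*}
A_{ijk}\rho A_{ijk}^{*}
=\lambda_{ij}\,\elbows{a_i^{1/2}\phi_{ik},\rho\,a_i^{1/2}\phi_{ik}}\,\ket{\phi_{ij}}\bra{\phi_{ij}}
=\lambda_{ij}\,\elbows{\phi_{ik},a_i^{1/2}\rho\,a_i^{1/2}\phi_{ik}}\,\ket{\phi_{ij}}\bra{\phi_{ij}}.
\end{equation*}
Next I would sum on $k$: since $\{\phi_{ik}\}_k$ is an ONB of $H$, the inner products add to $\rmtr (a_i^{1/2}\rho a_i^{1/2})=\rmtr (\rho a_i)$, producing $\lambda_{ij}\,\rmtr (\rho a_i)\ket{\phi_{ij}}\bra{\phi_{ij}}$. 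Summing on $j$ and invoking \eqref{eq26} gives $\rmtr (\rho a_i)\alpha_i$, and the final sum on $i$ reproduces the definition of $\iscript(\rho)$.

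For the identity $\iscripthat=\sum_i a_i$, I would compute $A_{ijk}^{*}A_{ijk}$; the inner factor $\bra{\phi_{ij}}\phi_{ij}\rangle=1$ collapses and leaves
\begin{equation*}
A_{ijk}^{*}A_{ijk}=\lambda_{ij}\,\ket{a_i^{1/2}\phi_{ik}}\bra{a_i^{1/2}\phi_{ik}}
=\lambda_{ij}\,a_i^{1/2}\ket{\phi_{ik}}\bra{\phi_{ik}}a_i^{1/2}.
\end{equation*}
Summing on $j$ uses $\sum_j\lambda_{ij}=\rmtr (\alpha_i)=1$ (because $\alpha_i\in\sscript (H)$), then summing on $k$ uses the completeness relation to give $a_i^{1/2}Ia_i^{1/2}=a_i$. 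Summing on $i$ yields $\sum_i a_i$, and by the remark following \eqref{eq11} this equals $\iscripthat$.

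There is no real obstacle here; the argument is a bookkeeping calculation. The only place where something could go wrong is forgetting to extend $\{\phi_{ij}\}_j$ to a full orthonormal basis of $H$, without which neither the trace identity used in summing on $k$ (for the first part) nor the final completeness step (for the second part) would go through.
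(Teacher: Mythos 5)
Your proof is correct and follows essentially the same route as the paper's: the same rank-one computations of $A_{ijk}\rho A_{ijk}^{*}$ and $A_{ijk}^{*}A_{ijk}$, with the sums collapsed in the same way using orthonormality of $\{\phi_{ik}\}_k$, the relation $\sum_j\lambda_{ij}=1$, and the completeness relation $\sum_k\ket{\phi_{ik}}\bra{\phi_{ik}}=I$. Your explicit remark that $\{\phi_{ij}\}_j$ must be taken as a full orthonormal basis of $H$ (padding with zero eigenvalues if needed) is something the paper asserts without comment, so it is a small clarification rather than a different approach.
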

\begin{proof}
Since \eqref{eq26} is a spectral representation, we conclude that $\brac{\phi _{ij}\colon j=1,2,\ldots n_i}$ is an orthonormal basis for $H$, $i=1,2,\ldots ,n$ and $\lambda _{ij}\ge 0$ with $\sum\limits _{j=1}^{n_i}\lambda _{ij}=1$. Summing over all applicable $i,j,k$, we obtain for all
$\rho\in\sscript (H)$ that
\begin{align*}
\sum _{i,j,k}A_{ijk}\rho A_{ijk}^*&=\sum _{i,j,k}\lambda _{ij}\ket{\phi _{ij}}\bra{a _i^{1/2}\phi _{ik}}\rho\ket{a_i^{1/2}\phi _{ik}}\bra{\phi _{ij}}\\
  &=\sum _{i,j,k}\lambda _{ij}\elbows{a_i^{1/2}\phi _{ik},\rho a_i^{1/2}\phi _{ik}}\ket{\phi _{ij}}\bra{\phi _{ij}}\\
  &=\sum _{i,j,k}\lambda _{ij}\elbows{\phi _{ik},a_i^{1/2}\rho a_i^{1/2}\phi _{ik}}\ket{\phi _{ij}}\bra{\phi _{ij}}\\
  &=\sum _{i,j}\lambda _{ij}\rmtr (\rho a_i)\ket{\phi _{ij}}\bra{\phi _{ij}}=\sum _i\rmtr (\rho a_i)\alpha _i=\iscript (\rho )
\end{align*}
We conclude that
\begin{align*}
\iscripthat =\sum _{i,j,k}A_{ijk}^*A_{ijk}&=\sum _{i,j,k}\lambda _{ij}\ket{a_i^{1/2}\phi _{ik}}\bra{\phi _{ij}}\,\ket{\phi _{ij}}\bra{a_i^{1/2}\phi _{ik}}\\
&=\sum _{i,j,k}\lambda _{ij}\ket{a_i^{1/2}\phi _{ik}}\bra{a_i^{1/2}\phi _{ik}}=\sum _{i,k}\ket{a_i^{1/2}\phi _{ik}}\bra{a_i^{1/2}\phi _{ik}}\\
&=\sum _{i,k}a_i^{1/2}\ket{\phi _{ik}}\bra{\phi _{ik}}a_i^{1/2}=\sum _ia_i\qedhere
\end{align*}
\end{proof}

\begin{cor}    
\label{cor23}
If $\iscript (\rho )=\rmtr (\rho a)\alpha$ is a trivial operation and $\alpha$ has spectral representation
$\alpha =\sum\limits _{i=1}^n\lambda _i\ket{\phi _i}\bra{\phi _i}$, then $A_{ij}=\lambda _i^{1/2}\ket{\phi _i}\bra{a^{1/2}\phi _j}$, $i,j=1.2.\ldots ,n$, gives a set of Kraus operators for $\iscript$. Moreover, we have that $\iscripthat =a$.
\end{cor}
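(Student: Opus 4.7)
The plan is to recognize Corollary \ref{cor23} as the specialization of Theorem \ref{thm22} to the case $n=1$. A trivial operation $\iscript(\rho)=\rmtr(\rho a)\alpha$ is, by definition, a semi-trivial operation with a single effect $a_1=a$ and a single state $\alpha_1=\alpha$, so Theorem \ref{thm22} applies directly with the outer index collapsed. Thus nothing new needs to be proved; the task is purely a matter of checking that the notational substitutions are coherent.

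Concretely, I would set $n=1$ in Theorem \ref{thm22}, write $n_1=n$, and identify $\lambda_{1j}=\lambda_j$, $\phi_{1j}=\phi_j$, so that the hypothesis \eqref{eq26} becomes $\alpha=\sum_{j=1}^n\lambda_j\ket{\phi_j}\bra{\phi_j}$, which is exactly the spectral representation in the corollary's hypothesis. The Kraus operators supplied by the theorem are
\begin{equation*}
A_{1jk}=\lambda_j^{1/2}\ket{\phi_j}\bra{a^{1/2}\phi_k},
\end{equation*}
and after dropping the trivial first index these become the $A_{jk}$ claimed in the corollary. The conclusion $\iscripthat=\sum_{i=1}^1 a_i=a$ follows immediately from the corresponding statement in Theorem \ref{thm22}.

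There is essentially no obstacle: the only thing one must be careful about is that the reuse of the symbol $n$ in the corollary (for the dimension of the spectral expansion of $\alpha$) corresponds to $n_1$ in the theorem, not to the outer summation index. Once this bookkeeping is settled, the proof is one line invoking Theorem \ref{thm22}. If an author prefers a self-contained verification, one could instead simply reproduce the short direct computations $\sum_{j,k}A_{jk}\rho A_{jk}^*=\rmtr(\rho a)\alpha$ and $\sum_{j,k}A_{jk}^*A_{jk}=a$ in the manner of the proof of Theorem \ref{thm22}, but this would duplicate the argument without adding content.
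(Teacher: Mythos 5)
Your proposal is correct and matches the paper's approach exactly: the paper gives no separate proof for Corollary~\ref{cor23}, treating it precisely as the $n=1$ specialization of Theorem~\ref{thm22} (one state $\alpha$, one effect $a$), with the same index relabeling you describe. The bookkeeping point about the corollary's $n$ playing the role of $n_1$ in the theorem is handled correctly.
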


It can be shown directly that $\iscripthat =\sum a_i$ in Theorem~\ref{thm22}. Indeed, we have that
\begin{equation*}
\rmtr\sqbrac{\iscript (\rho)}=\sum\rmtr (\rho a_i)=\rmtr\sqbrac{\rho\sum a_i}
\end{equation*}
Moreover, Corollary~\ref{cor23} shows that the trivial operation $\iscript (\rho )=\rmtr (\rho a)\alpha$ measures the effect $a$. Letting $\alpha$ vary, we obtain an infinite number of different operations that measure $a$. Also, the L\"uders operation $\lscript ^a(\rho )=a^{1/2}\rho a^{1/2}$ measures $a$ because
\begin{equation*}
\rmtr\sqbrac{\lscript ^a(\rho )}=\rmtr (a^{1/2}\rho a^{1/2})=\rmtr (\rho a)
\end{equation*}
for all $\rho\in\sscript (H)$. We now employ these special operations to show that Bayes' rules do not hold for operations.

\begin{exam}{1}  
Let $0<a<I$, $\alpha\in\sscript (H)$ and let $\iscript _1=\rmtr (\rho a)\alpha$ be a trivial operation. Also, let $\iscript _2$ be the trivial operation
\begin{equation*}
\iscript _2(\rho )=\rmtr (\rho a')\alpha =\sqbrac{1-\rmtr (\rho\alpha )}\alpha
\end{equation*}
Then $\cscript (\rho )=(\iscript _1+\iscript _2)(\rho )=\alpha$ is a constant channel. Now \eqref{eq24} becomes
\begin{equation}                
\label{eq27}
\rmtr\sqbrac{\jscript (\rho )}=\rmtr\sqbrac{\jscript\paren{\cscript (\rho )}}=\rmtr\sqbrac{\jscript (\alpha )}
\end{equation}
Equation~\eqref{eq27} does not hold, in general. For example, letting $\jscript =\iscript _1$, \eqref{eq27} becomes
\begin{equation}                
\label{eq28}
\rmtr (\rho a)=\rmtr\sqbrac{\iscript _1(\rho )}=\rmtr\sqbrac{\iscript _1(\alpha )}=\rmtr (\alpha a)
\end{equation}
If $\rho\ne\alpha$, then there exists an $a\in\escript (H)$ for which \eqref{eq28} fails.\hfill\qedsymbol
\end{exam}

We now give another example for which \eqref{eq24} fails.

\begin{exam}{2}  
Let $a$ be a projection and define sharp L\"uders operations $\iscript _1(\rho )=a\rho a$, $\iscript _2(\rho )=a'\rho a'$. Then
$\cscript =\iscript _1+\iscript _2$ is a channel because for all $\rho\in\sscript (H)$ we obtain
\begin{equation*}
\rmtr\sqbrac{\cscript (\rho )}=\rmtr (a\rho a+a'\rho a')=\rmtr (\rho a+\rho a')=\rmtr (\rho )=1
\end{equation*}
Let $b\in\escript (H)$ and define the L\"uders operation $\jscript (\rho )=b\circ\rho =b^{1/2}\rho b^{1/2}$. Then \eqref{eq24} becomes
\begin{align}               
\label{eq29}
\rmtr (\rho b)&=\rmtr\sqbrac{\jscript (\rho )}=\rmtr\sqbrac{\jscript\paren{\cscript (\rho )}}
  =\rmtr\sqbrac{\jscript (a\rho a)}+\rmtr\sqbrac{\jscript (a'\rho a')}\notag\\
  &=\rmtr (b^{1/2}a\rho ab^{1/2})+(b^{1/2}a'\rho a'b^{1/2})=\rmtr (\rho aba)+\rmtr (\rho a'ba')\notag\\
  &=\rmtr\sqbrac{\rho (aba+a'ba')}
\end{align}
Now \eqref{eq29} holds for all $\rho\in\sscript (H)$ if and only if $b=aba+a'ba'$. Hence, $ab=aba=ba$. Thus, if $a$ and $b$ do not commute, then \eqref{eq29} does not hold for all $\rho\in\sscript (H)$.\hfill\qedsymbol
\end{exam}

As for effects, Bayes' second rule for operations becomes $\pscript _\rho (\iscript\circ\jscript )=\pscript _\rho (\jscript\circ\iscript )$. This is equivalent to
\begin{equation}                
\label{eq210}
\rmtr\sqbrac{\jscript\paren{\iscript (\rho )}}=\rmtr\sqbrac{\iscript\paren{\jscript (\rho )}}
\end{equation}

\begin{exam}{3}  
Let $\iscript (\rho )=\rmtr (\rho a)\alpha$ and $\jscript (\rho )=\rmtr (\rho a)\beta$ be trivial operations. Then 
\begin{equation*}
\rmtr\sqbrac{\jscript\paren{\iscript (\rho )}}=\rmtr (\rho a)\rmtr\sqbrac{\jscript (\alpha )}=\rmtr (\rho a)\rmtr (\alpha a)
\end{equation*}
and similarly, $\rmtr\sqbrac{\iscript\paren{\jscript (\rho )}}=\rmtr (\rho a)\rmtr (\beta a)$. These are not the same so \eqref{eq210} fails, in general. As another example, let $\iscript (\rho )=a^{1/2}\rho a^{1/2}$, $\jscript (\rho )=b^{1/2}\rho b^{1/2}$ be L\"uders operations. We then obtain
\begin{equation*}
\rmtr\sqbrac{\jscript\paren{\iscript (\rho )}}=\rmtr (b^{1/2}a^{1/2}\rho a^{1/2}b^{1/2})=\rmtr (\rho a\circ b)]
\end{equation*}
and similarly, $\rmtr\sqbrac{\iscript\paren{\jscript (\rho )}}=\rmtr (\rho b\circ a)$. These are equal if and only if $ab=ba$ so again\eqref{eq210} fails, in general.\hfill\qedsymbol
\end{exam}

We close this section by considering complements of operations. For $\iscript ,\jscript\in\oscript (H)$, we write $\iscript\le\jscript$ if
$\iscript (\rho )\le\jscript (\rho )$ for all $\rho\in\sscript (H)$. If $\iscript\le\cscript$ where $\cscript$ is a channel, we call
$\iscript ^\cscript =\cscript -\iscript$ the $\cscript$-\textit{complement of} $\iscript$. Then $\iscript ^\cscript$ is the unique operation satisfying
$\iscript +\iscript ^\cscript =\cscript$. An operation other than $I$ can have many complements. For example, any channel is a complement of $0$. We frequently say, for short, that $\jscript$ is a complement of $\iscript$ instead of $\jscript$ is a $\cscript$-complement of $\iscript$.

\begin{exam}{4}  
Let $\iscript\in\oscript (H)$ with Kraus decomposition $\iscript (\rho )=\sum A_i\rho A_i^*$. Letting $b=\sum A_i^*A_i$ we have that $b\in\escript (H)$. Then the L\"uders operation $\jscript (\rho )=(I-b)^{1/2}\rho (I-b)^{1/2}$ is a complement of $\iscript$ because
\begin{equation*}
(\iscript +\jscript )(\rho )=\iscript (\rho )+\jscript (\rho )=\sum A_i\rho A_i^*+(I-b)^{1/2}\rho (I-b)^{1/2}
\end{equation*}
and since $\sum A_i^*A_i+I-b=I$, we conclude that $\iscript +\jscript$ is a channel. This shows that any operation has a unique L\"uders operation complement.\hfill\qedsymbol
\end{exam}

\begin{exam}{5}  
Let $\iscript (\rho )=a\circ\rho =a^{1/2}\rho a^{1/2}$ be a L\"uders operation with $a\in\escript (H)$. Then $\jscript (\rho )=a'\circ\rho$ is a complement of $\iscript$ because $\iscript (\rho )+\jscript (\rho )=a\circ \rho +a'\circ\rho$ is a channel. Moreover, if $\iscript (\rho )=\rmtr (\rho a)\alpha$ is trivial, then $\jscript (\rho )=\rmtr (\rho a')\alpha$ is a complement of $\iscript$ because
\begin{equation*}
(\iscript +\jscript )(\rho )=\iscript (\rho )+\jscript (\rho )=\rmtr (\rho a)\alpha +\rmtr (\rho a')\alpha =\rmtr (\rho )\alpha =\alpha
\end{equation*}
is a channel.\hfill\qedsymbol
\end{exam}

\begin{thm}    
\label{thm24}
{\rm{(i)}}\enspace $\jscript$ is a complement of $\iscript$ if and only if $\jscripthat =(\,\iscript\,)'$.
{\rm{(ii)}}\enspace If $\iscript$ is sharp, then $\iscript\wedge\iscript ^\cscript =0$ for some channel $\cscript$.
\end{thm}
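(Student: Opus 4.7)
For part (i), my plan is to exploit that the hat map is additive and that an operation $\cscript$ is a channel precisely when $\cscripthat = I$; indeed $\rmtr\sqbrac{\cscript(\rho)} = 1 = \rmtr(\rho I)$ for every $\rho \in \sscript(H)$ is equivalent to $\cscripthat = I$. If $\jscript$ is a $\cscript$-complement of $\iscript$, so $\iscript + \jscript = \cscript$, then taking hats yields $\iscripthat + \jscripthat = I$, i.e.\ $\jscripthat = (\iscripthat)'$. Conversely, if $\jscripthat = (\iscripthat)'$, then $\iscript + \jscript$ is completely positive with hat $I$, so $\rmtr\sqbrac{(\iscript + \jscript)(\rho)} = \rmtr(\rho)$ for every state, which makes $\iscript + \jscript$ a channel and $\jscript$ its complement of $\iscript$.

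For part (ii), I would use the fact that a sharp $\iscript$ has the form $\iscript(\rho) = \sum a_i \rho a_i$ with mutually orthogonal projections $a_i$, so that $p := \iscripthat = \sum a_i$ is itself a projection. My candidate channel is
\[ \cscript(\rho) = \iscript(\rho) + p'\rho p', \]
obtained by adding the L\"uders operation $\lscript^{p'}$; as a sum of operations it is completely positive, and its hat is $p + p' = I$, so by the criterion established in part (i) it is a channel, and clearly $\iscript \le \cscript$. The $\cscript$-complement is then $\iscript^\cscript(\rho) = p'\rho p'$, itself a L\"uders operation with $(\iscript^\cscript)^\wedge = p'$.

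To finish, suppose $\kscript \in \oscript(H)$ satisfies $\kscript \le \iscript$ and $\kscript \le \iscript^\cscript$. Tracing against an arbitrary state $\rho$ gives $\rmtr(\rho \kscripthat) \le \rmtr(\rho p)$ and $\rmtr(\rho \kscripthat) \le \rmtr(\rho p')$, so $\kscripthat \le p$ and $\kscripthat \le p'$ in $\escript(H)$. The text already observed that sharpness of $p$ is equivalent to $p \wedge p' = 0$, so $\kscripthat = 0$. Since $\kscript(\rho) \ge 0$ and $\rmtr\sqbrac{\kscript(\rho)} = \rmtr(\rho \kscripthat) = 0$ for every $\rho$, we conclude $\kscript(\rho) = 0$ for every state, whence $\kscript = 0$. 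The only subtle point is the choice of $\cscript$: the claim is existential, and picking the L\"uders companion corresponding to $p'$ is the natural move because it makes $\iscript^\cscript$ sharp with hat $p'$, reducing the operation-level meet to the already-known effect-level identity $p \wedge p' = 0$.
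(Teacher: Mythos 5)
Your proof is correct and follows essentially the same route as the paper: in (i) additivity of the hat map for one direction and the trace/hat characterization of channels for the converse, and in (ii) the very same channel $\cscript = \iscript + \lscript^{p'}$ with $p = \iscripthat = \sum a_i$, reducing the meet $\iscript \wedge \iscript^\cscript = 0$ to the effect-level fact $p \wedge p' = 0$ for the projection $p$. The only difference is cosmetic: you make explicit the steps (orthogonality of the $a_i$, hence $p$ a projection, and $\kscripthat = 0 \Rightarrow \kscript = 0$) that the paper leaves implicit.
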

\begin{proof}
(i)\enspace Suppose that $\jscript$ is a $\cscript$-complement of $\iscript$ so $\jscript =\iscript ^\cscript =\cscript -\iscript$. Then
\begin{equation*}
\jscripthat =\cscripthat -\iscripthat =I-\iscripthat =(\,\iscripthat\,)'
\end{equation*}
Conversely, if $\jscripthat =(\,\iscripthat\,)'$ then
\begin{equation*}
\rmtr\sqbrac{\jscript (\rho )+\iscript (\rho )}=\rmtr (\rho\jscript\,)+\rmtr (\rho\iscripthat\,)=\rmtr (\rho )=1
\end{equation*}
for all $\rho\in\sscript (H)$. Hence, $\jscript +\iscript$ is a channel so $\jscript$ is a complement of $\iscript$.
(ii)\enspace. Let $\iscript (\rho )=\sum a_i\rho a_i$ where $a_i$ are projections and let $\jscript (\rho )=b'\rho b'$ where $b=\sum a_i$. Then letting
\begin{equation*}
\cscript (\rho )=\iscript (\rho )+\jscript (\rho )=\sum a_i\rho a_i+b'\rho b'
\end{equation*}
we have that
\begin{equation*}
\rmtr\sqbrac{\cscript (\rho )}=\rmtr\sqbrac{\rho\paren{\sum a_i+b'}}=\rmtr (\rho )=1
\end{equation*}
for every $\rho\in\sscript (H)$ so $\cscript$ is a channel. Hence, $\jscript =\iscript ^\cscript$. To show that $\iscript\wedge\jscript =0$, let
$\kscript\le\iscript ,\jscript$. Then $\kscripthat\le\iscripthat =\sum a_i$ and
\begin{equation*}
\kscripthat\le\jscripthat =b'=I-\sum a_i
\end{equation*}
It follows that $\kscripthat =0$ so $\kscript =0$.
\end{proof}

\section{Observables and Instruments}  
We now extend our previous work to observables and instruments. An \textit{observable} is a finite set
$A=\brac{a_x\colon x\in\Omega _A}\subseteq\escript (H)$ satisfying $\sum\limits _{x\in\Omega _A}a_x=I$ \cite{bgl95,fhl18,hz12,nc00}. We call
$\Omega _A$ the \textit{outcome set} and $x\in\Omega _A$ is an \textit{outcome} for $A$. We think of $A$ as an experiment with possible outcomes $x\in\Omega _A$ and $a_x$ is the effect that is $yes$ when $A$ has outcome $x$. The \textit{probability} that $A$ has outcome $x$ when the system is in state $\rho\in\sscript (H)$ is $\pscript _\rho (x)=\rmtr (\rho a_x)$ and we call $\Phi _\rho ^A(x)=\pscript _\rho (x)$ the \textit{distribution} of $A$. If
$\Delta\subseteq\Omega _A$, we define the probability of $\Delta$ in the state $\rho$ by
\begin{equation*} 
\Phi _\rho ^A(\Delta )=\pscript _\rho (\Delta )=\sum\brac{\pscript _\rho (x)\colon x\in\Delta}=\sum\brac{\rmtr (\rho a_x)\colon x\in\Delta}
\end{equation*}
We see that $\Delta\mapsto\sum _{x\in\Delta}A_x$ is an effect-valued measure on $2^{\Omega _A}$. We denote the set of observables on $H$ by
$\rmob (H)$.

If $A=\brac{a_x\colon x\in\Omega _A}$ and $B=\brac{b_y\colon y\in\Omega _B}$ are observables, we define their \textit{sequential product} \cite{gud120,gud220,gud320}
\begin{equation*} 
(A\circ B)_{(x,y)}=\brac{a_x\circ b_y\colon (x,y)\in\Omega _A\times\Omega _B}
\end{equation*}
with outcome set $\Omega _{A\circ B}=\Omega _A\times\Omega _B$. Notice that $A\circ B$ is indeed an observable because
\begin{equation*}
\sum _{(x,y)\in\Omega _{A\circ B}}(A\circ B)_{(x,y)}=\sum _{x,y}(a_x\circ b_y)=\sum _xa_x\circ\sum _yb_y=\sum _xa_x=I
\end{equation*}
We also have the observable $B$ \textit{conditioned by} the observable $A$ defined as \cite{gud120}
\begin{equation*}
(B\mid A)_y=\sum _{x\in\Omega _A}(A\circ B)_{(x,y)}=\sum _{x\in\Omega _A}(a_x\circ b_y)
\end{equation*}
where $\Omega _{(B\mid A)}=\Omega _B$. Again, $(B\mid A)$ is an observable because
\begin{equation*}
\sum _{y\in\Omega _B}(B\mid A)_y=\sum _{x,y}(a_x\circ b_y)=I
\end{equation*}
Just as for effects, Bayes' rule
\begin{equation*}
\pscript _\rho (b_y)=\sum _x\pscript _\rho (a_x)\pscript _\rho (b_y\mid a_x)
\end{equation*}
does not hold. However, we do have the result
\begin{equation*}
\pscript _\rho\sqbrac{(B\mid A)_y}=\sum _x\pscript _\rho (a_x)\pscript _\rho (b_y\mid a_x)
\end{equation*}
Indeed, for all $\rho\in\sscript (H)$ we obtain
\begin{align*}
\pscript _\rho\sqbrac{(B\mid A)_y}&=\rmtr\sqbrac{\rho (B\mid A)_y}=\rmtr\sqbrac{\rho\sum _x(a_x\circ b_y)}=\sum _x\rmtr(\rho a_x\circ b_y)\\
   &=\sum _x\rmtr (a_x^{1/2}\rho a_x^{1/2}b_y)=\sum _x\rmtr (\rho a_x)\pscript _\rho (b_y\mid a_x)\\
   &=\sum _x\pscript _\rho (a_x)\pscript _\rho (b_y\mid a_x)
\end{align*}
Also, notice that $y\mapsto\pscript _\rho (b_y\mid a_x)$ is additive so it is a real-valued measure.

An \textit{instrument} \cite{bgl95,hz12,nc00} is a finite set $\iscript =\brac{\iscript _x\colon x\in\Omega _\iscript}\subseteq\oscript (H)$ satisfying
$\iscriptbar =\sum _{x\in\Omega _\iscript}\iscript _x$ is a channel. We call $\Omega _\iscript$ the \textit{outcome set} and $x\in\Omega _\iscript$ is an \textit{outcome} for $\iscript$. The \textit{probability} that $\iscript$ has outcome $x$ when the system is in state $\rho\in\sscript (H)$ is
$\pscript _\rho (x)=\rmtr\sqbrac{\iscript _x(\rho )}$ and we call $\Phi _\rho ^\iscript (x)=\pscript _\rho (x)$ the \textit{distribution} of $\iscript$. If
$\Delta\subseteq\Omega _\iscript$, we define the probability of $\Delta$ in the state $\rho$ by
\begin{equation*}
\Phi _\rho ^\iscript (\Delta )=\pscript _\rho (\Delta )=\sum\brac{\pscript _\rho (x)\colon x\in\Delta}
   =\sum\brac{\rmtr\sqbrac{\iscript _x(\rho )}\colon x\in\Delta}
\end{equation*}
Then $\Delta\mapsto\sum _{x\in\Delta}\iscript _x$ is an operation-valued measure on $2^{\Omega _\iscript}$. We denote the set of instruments on $H$ by $\rmin (H)$. We say that $\iscript\in\rmin (H)$ \textit{measures} $A\in\rmob (H)$ if $\Omega _A=\Omega _\iscript$ and
$\Phi _\rho ^A(x)=\Phi _\rho ^\iscript (x)$ for all $x\in\Omega _A$, $\rho\in\sscript (H)$. Since this is equivalent to $\iscripthat _x=a_x$, we have that
$\iscript$ measures a unique $\iscripthat\in\rmob (H)$ given by $\iscripthat =\brac{\iscripthat _x\colon x\in\Omega _\iscript}$. We think of $\iscript$ as an apparatus that is employed to measure the observable $\iscripthat$. Although $\iscript$ measures the unique $\iscripthat\in\rmob (H)$, as we shall see, an observable is measured by many instruments.

If $\iscript =\brac{\iscript _x\colon x\in\Omega _\iscript}$ and $\jscript =\brac{\jscript _y\colon y\in\Omega _\jscript}$ are instruments, we define their 
\textit{sequential product} \cite{gud120,gud220,gud320}
\begin{equation*}
(\iscript\circ\jscript )_{(x,y)}=\brac{\iscript _x\circ\jscript _y\colon (x,y)\in\Omega _\iscript\times\Omega _\jscript}
\end{equation*}
with outcome set $\Omega _{\iscript\circ\jscript}=\Omega _\iscript\times\Omega _\jscript$. We see that $\iscript\circ\jscript$ is indeed an instrument because
\begin{equation*}
\sum _{(x,y)\in\Omega _{\iscript\circ\jscript}}(\iscript\circ\jscript )_{(x,y)}=\sum _{x,y}\iscript _x\circ\jscript _y
   =\sum _x\iscript _x\circ\sum _y\jscript _y=\iscriptbar\circ\jscriptbar
\end{equation*}
which is a channel. We also have the instrument $\jscript$ \textit{conditioned by} $\iscript$ defined as
\begin{equation*}
(\jscript\mid\iscript )_y(\rho )=\sum _x(\iscript _x\circ\jscript _y)(\rho )=\sum _x\jscript _y\sqbrac{\iscript _x(\rho )}
=\jscript _y\sqbrac{\sum _x\iscript _x(\rho )}=\jscript _y\sqbrac{\,\iscriptbar (\rho )}
\end{equation*}
We have that $(\jscript\mid\iscript )$ is indeed an instrument with outcome space $\Omega _\jscript$ because
\begin{equation*}
\sum _y(\jscript\mid\iscript )_y=\sum _y\jscript _y\sqbrac{\,\iscriptbar (\rho )}=\jscriptbar\sqbrac{\,\iscriptbar (\rho )}
\end{equation*}
which is a channel. As with observables, we have that
\begin{equation*}
\pscript _\rho\sqbrac{(\jscript\mid\iscript )_y}=\sum _x\pscript _\rho (\iscript _x)\pscript _\rho (\jscript _y\mid\iscript _x)
\end{equation*}
because
\begin{align*}
\pscript _\rho\sqbrac{(\jscript\mid\iscript )_y}&=\rmtr\sqbrac{(\jscript\mid\iscript )_y}=\rmtr\brac{\sum _x\jscript _y\sqbrac{\iscript _x(\rho )}}\\
   &=\sum _x\rmtr\brac{\sqbrac{\iscript _x(\rho )}\pscript _\rho (\jscript _y\mid\iscript _x)}
   =\sum _x\pscript _\rho (\iscript _x)\pscript _\rho (\jscript _y\mid\iscript _x)
\end{align*}
Also notice that $y\mapsto\pscript _\rho (\jscript _y\mid\iscript _x)$ is additive and a probability measure because
\begin{align*}
\sum _y\pscript _\rho (\jscript _y\mid\iscript _x)&=\frac{1}{\rmtr\sqbrac{\iscript _x(\rho )}}\rmtr\brac{\sum _y\jscript _y\sqbrac{\iscript _x(\rho )}}
   =\frac{1}{\rmtr\sqbrac{\iscript _x(\rho )}}\rmtr\sqbrac{\,\jscriptbar\paren{\iscript _x(\rho )}}\\
   &=\rmtr\sqbrac{\,\jscriptbar\paren{\frac{\iscript _x(\rho )}{\rmtr\paren{\iscript _x(\rho )}}}}=1
\end{align*}
As with operations, Bayes' rules do not hold for instruments.

We now consider various types of instruments. A general instrument $\iscript =\brac{\iscript _x\colon x\in\Omega _\iscript}$ has a Kraus decomposition
$\iscript _x(\rho )=\sum\limits _{i=1}^{n_x}A_i^x\rho (A_i^x)^*$ with $\iscripthat _x=\sum\limits _{i=1}^{n_x}(A_i^x)^*A_i^x\le I$ and
$\iscriptbar (\rho )=\sum\limits _x\sum\limits _{i=1}^{n_x}A_i^x\rho (A_i^x)^*$ with $\sum\limits _x\sum\limits _{i=1}^{n_x}(A_i^x)^*A_i^x=I$. We say that
$\iscript$ is \textit{sharp} if $A_i^x$ are projections for all $i,x$. We then have that $\sum\limits _{x,i}A_i^x=I$ and it follows that $A_i^xA_j^y=0$ if
$(x,i)\ne (y,j)$. We say that $\iscript$ is \textit{atomic} if $A_i^x$ are one-dimensional projections for all $i,x$. An instrument $\iscript$ is
\textit{Kraus} \cite{kra83} if it has Kraus decompositions $\iscript _x(\rho )=A_x\rho A_x^*$ for all $x\in\Omega _\iscript$ in which case
$\sum\limits _xA_x^*A_x=I$ and $\iscripthat =\brac{A_x^*A_x\colon x\in\Omega _\iscript}$. An instrument $\iscript$ is \textit{L\"uders} \cite{lud51} if it is Kraus and has decomposition
\begin{equation*}
\iscript _x(\rho )=a_x\circ\rho =a_x^{1/2}\rho a_x^{1/2}
\end{equation*}
for every $x\in\Omega _\iscript$ where $a_x\in\escript (H)$. In this case $\sum\limits _xa_x=I$ and we obtain the observable
$\iscripthat =\brac{a_x\colon x\in\Omega _\iscript}$. We then use the notation $A=\iscripthat$ and write $\iscript =\lscript ^A$. It follows that
$(\lscript ^A)^\wedge =A$. An instrument $\iscript$ is \textit{trivial} if there exists an observable $A=\brac{a_x\colon x\in\Omega _\iscript}$ and
$\alpha\in\sscript (H)$ such that $\iscript _x(\rho )=\rmtr (\rho a_x)\alpha$ for all $x\in\Omega _\iscript$. In this case $\iscript _x=a_x$ so $\iscript$ measures $A$. We then say that $\iscript$ is trivial with observable $A$ and state $\alpha$. More generally, we say that $\iscript$ is
\textit{semi-trivial} with observable $A=\brac{a_x\colon x\in\Omega _\iscript}$ and states $\alpha _x$ if $\iscript _x(\rho )=\rmtr (\rho a_x)\alpha _x$. We again have that $\iscripthat _x=a_x$ so $\iscript$ measures $A$. These last three types illustrate that an observable can be measured by many different instruments

\begin{thm}    
\label{thm31}
{\rm{(i)}}\enspace For any $\iscript ,\jscript\in\rmin (H)$ we have
\begin{equation*}
\overline{(\iscript\circ\jscript )}=\overline{(\jscript\mid\iscript)}=\iscriptbar\circ\jscriptbar
\end{equation*}
{\rm{(ii)}}\enspace If $A\in\rmob (H)$ and $\iscript\in\rmin (H)$, then
\begin{equation*}
(\lscript ^A\circ\iscript )^\wedge =(\lscript ^A)^\wedge\circ\iscripthat =A\circ\iscripthat
\end{equation*}
{\rm{(iii)}}\enspace If $A,B\in\rmob (H)$, then
\begin{equation*}
(\lscript ^A\circ\lscript ^B)^\wedge =(\lscript ^A)^\wedge\circ (\lscript ^B)^\wedge =A\circ B
\end{equation*}
\end{thm}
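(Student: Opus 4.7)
For part (i), the plan is to compute both expressions directly from the definitions, exploiting additivity of the sequential product of operations in both variables (noted just after the definition in Section 1). From the definition of $\iscript\circ\jscript$ on products of outcomes, we get
\begin{equation*}
\overline{(\iscript\circ\jscript)}=\sum_{x,y}\iscript_x\circ\jscript_y=\paren{\sum_x\iscript_x}\circ\paren{\sum_y\jscript_y}=\iscriptbar\circ\jscriptbar.
\end{equation*}
For the conditioned instrument, the identity $(\jscript\mid\iscript)_y(\rho)=\jscript_y[\,\iscriptbar(\rho)]$ is already established in the text just before the theorem, so summing over $y$ yields $\overline{(\jscript\mid\iscript)}(\rho)=\jscriptbar[\,\iscriptbar(\rho)]=(\iscriptbar\circ\jscriptbar)(\rho)$.

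For part (ii), the approach is to identify the effect measured by each component $(\lscript^A\circ\iscript)_{(x,y)}$ directly through the defining relation $\rmtr[\iscript(\rho)]=\rmtr(\rho\,\iscripthat\,)$. Writing $A=\brac{a_x}$ and using $\lscript^A_x(\rho)=a_x^{1/2}\rho a_x^{1/2}$, I would apply cyclicity of the trace:
\begin{equation*}
\rmtr\sqbrac{(\lscript^A\circ\iscript)_{(x,y)}(\rho)}=\rmtr\sqbrac{\iscript_y(a_x^{1/2}\rho a_x^{1/2})}=\rmtr(a_x^{1/2}\rho a_x^{1/2}\iscripthat_y)=\rmtr\paren{\rho\,(a_x\circ\iscripthat_y)}.
\end{equation*}
Since this holds for every $\rho\in\sscript(H)$ and the effect is uniquely determined by such traces, we conclude $(\lscript^A\circ\iscript)^\wedge_{(x,y)}=a_x\circ\iscripthat_y$, which is by definition the $(x,y)$-component of $A\circ\iscripthat$. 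Using $(\lscript^A)^\wedge=A$, this gives the claimed equality.

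Part (iii) is then immediate: specialize (ii) to $\iscript=\lscript^B$ and apply $(\lscript^B)^\wedge=B$, so that
\begin{equation*}
(\lscript^A\circ\lscript^B)^\wedge=A\circ(\lscript^B)^\wedge=A\circ B=(\lscript^A)^\wedge\circ(\lscript^B)^\wedge.
\end{equation*}

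I do not anticipate a genuine obstacle here; everything reduces to bookkeeping with the defining relations. The one spot that requires a moment of care is part (ii), where one must remember that $A\circ\iscripthat$ denotes the sequential product of two observables (componentwise $a_x\circ\iscripthat_y$ indexed by $\Omega_A\times\Omega_\iscript$), so that matching the indices and reading off the effect via the trace identity is the substantive content of the argument.
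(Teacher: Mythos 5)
Your proposal is correct and follows essentially the same route as the paper: part (i) by additivity of $\circ$ in both variables, part (ii) by evaluating $\rmtr\sqbrac{(\lscript ^A\circ\iscript )_{(x,y)}(\rho )}$ via the defining relation $\rmtr\sqbrac{\iscript _y(\sigma )}=\rmtr (\sigma\iscripthat _y)$ and cyclicity of the trace, and part (iii) by specializing (ii). The only cosmetic difference is that for $\overline{(\jscript\mid\iscript )}$ you sum the already-derived form $(\jscript\mid\iscript )_y(\rho )=\jscript _y\sqbrac{\,\iscriptbar (\rho )}$ while the paper sums $\sum _x\iscript _x\circ\jscript _y$ directly, which is the same computation.
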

\begin{proof}
(i)\enspace We have that
\begin{equation*}
\overline{(\iscript\circ\jscript )}=\sum _{x,y}(\iscript\circ\jscript )_{(x,y)}=\sum _{x,y}\iscript _x\circ\jscript _y=\sum _x\iscript _x\circ\sum _y\jscript _y
   =\iscriptbar\circ\jscriptbar
\end{equation*}
Moreover,
\begin{equation*}
\overline{(\jscript\mid\iscript )}=\sum _y(\jscript\mid\iscript )_y=\sum _y\sum _x\iscript _x\circ\jscript _y=\iscriptbar\circ\jscriptbar
\end{equation*}
(ii)\enspace Letting $A=\brac{a_x\colon x\in\Omega _A}$ we obtain
\begin{align*}
\rmtr\sqbrac{\rho (\lscript ^A\circ\iscript )_{(x,y)}^\wedge}&=\rmtr\sqbrac{(\lscript ^A\circ\iscript )_{(x,y)}(\rho )}
   =\rmtr\sqbrac{(\lscript _x^A\circ\iscript _y)(\rho )}\\
   &=\rmtr\sqbrac{\iscript _y(a_x^{1/2}\rho a_x^{1/2})}=\rmtr\sqbrac{a_x^{1/2}\rho a_x^{1/2}\iscripthat _y}\\
   &=\rmtr (\rho a_x^{1/2}\iscripthat _ya_x^{1/2})=\rmtr (\rho a_x\circ\iscripthat _y)=\rmtr\sqbrac{\rho (A\circ\iscripthat\,)_{(x,y)}}
\end{align*}
for all $\rho\in\sscript (H)$. Hence, $(\lscript ^A\circ\iscript )_{(x,y)}^\wedge =(A\circ\iscripthat\,)_{(x,y)}$ for all
$(x,y)\in\Omega _A\times\Omega _\iscript$. We conclude that $(\lscript ^A\circ\iscript )^\wedge=A\circ\iscripthat =(\lscript ^A)^\wedge\circ\iscripthat$.
(iii) follows from (ii).
\end{proof}

\begin{exam}{6}  
Unlike Theorem~\ref{thm31}(ii) we show that $(\iscript\circ\lscript ^A)^\wedge\ne\iscripthat\circ A$, in general. Let $\iscript _x(\rho )=\rmtr (\rho b_x)\alpha$ be a trivial instrument. Letting $A=\brac{a_y\colon y\in\Omega _A}$ we have
\begin{equation*}
(\iscripthat\circ A)_{(x,y)}=\iscripthat _x\circ a_y=b_x\circ a_y=b_x^{1/2}a_yb_x^{1/2}
\end{equation*}
On the other hand, since
\begin{align*}
\rmtr\sqbrac{\rho (\iscript\circ\lscript ^A)_{(x,y)}^\wedge}&=\rmtr\sqbrac{(\iscript\circ\lscript ^A)_{(x,y)}(\rho )}
   =\rmtr\sqbrac{(\iscript _x\circ\lscript _y^A)(\rho )}\\
   &=\rmtr\sqbrac{a_y^{1/2}\iscript _x(\rho )a_y^{1/2}}=\rmtr (\rho b_x)\rmtr (a_y^{1/2}\alpha a_y^{1/2})\\
   &=\rmtr (\rho b_x)\rmtr (\alpha a_y)=\rmtr\sqbrac{\rho\,\rmtr (\alpha a_y)b_x}
\end{align*}
we have that $(\iscript\circ\lscript ^A)_{(x,y)}^\wedge =\rmtr (\alpha a_y)b_x$ it is clear that $b_x^{1/2}a_yb_x^{1/2}\ne\rmtr (\alpha a_y)b_x$,
in general.\hfill\qedsymbol
\end{exam}

\begin{exam}{7}  
Let $A=\brac{a_x\colon x\in\Omega _A}$, $B=\brac{b_y\colon y\in\Omega _B}$ be observable and let $\iscript _x(\rho )=\rmtr (\rho a_x)\alpha$,
$\jscript _y(\rho )=\rmtr (\rho b_y)\beta$ be trivial instruments. We show that $(\iscript\circ\jscript )^\wedge\ne\iscripthat\circ\jscripthat$, in general. We have that
\begin{equation*}
\rmtr\sqbrac{\rho (\iscript\circ\jscript )_{(x,y)}^\wedge}=\rmtr\sqbrac{(\iscript\circ\jscript )_{(x,y)}(\rho )}=\rmtr\sqbrac{\jscript _y(\iscript _x(\rho ))}
   =\rmtr (\rho a_x)\rmtr (\alpha b_y)
\end{equation*}
However, $\rmtr (\rho\iscripthat _x\circ\jscripthat _y)=\rmtr (\rho a_xb_y)$ and these do not agree, in general. For example, if $a_x=\ket{\phi}\bra{\phi}$,
$b_y=\ket{\psi}\bra{\psi}$ where $\ab{\elbows{\phi ,\psi}}^2\ne\elbows{\psi ,\alpha\psi}$ then
\begin{align*}
\rmtr\sqbrac{\rho (\iscript\circ\jscript )_{(x,y)}^\wedge}&=\elbows{\phi\rho\phi}\elbows{\psi ,\alpha\psi}\ne\ab{\elbows{\phi ,\psi}}^2\elbows{\phi ,\rho\phi}\\
    &=\ab{\elbows{\phi ,\psi}}^2\rmtr\paren{\rho\ket{\phi}\bra{\phi}}=\rmtr\paren{\rho\ket{\phi}\bra{\phi}\,\ket{\psi}\bra{\psi}\,\ket{\phi}\bra{\phi}}\\
    &=\rmtr(\rho a_x\circ b_y)
\end{align*}
Hence, $\iscripthat\circ\jscripthat =A\circ B\ne (\iscript\circ\jscript )^\wedge$.\hfill\qedsymbol
\end{exam}

\begin{exam}{8}  
We give another example in which $(\iscript\circ\jscript )^\wedge\ne\iscripthat\circ\jscripthat$. Let $\iscript _x(\rho )=A_x\rho A_x^*$,
$\jscript _y(\rho )=B_y\rho B_y^*$ be Kraus instruments. Then
\begin{equation*}
\rmtr\sqbrac{(\iscript\circ\jscript )_{(x,y)}(\rho )}=\rmtr\sqbrac{\jscript _y\paren{\iscript _x(\rho )}}=\rmtr\sqbrac{B_y(A_x\rho A_x^*)B_y^*}
   =\rmtr (\rho A_x^*B_y^*B_yA_x)
\end{equation*}
Hence, $(\iscript\circ\jscript )_{(x,y)}^\wedge =A_x^*B_y^*B_yA_x$. On the other hand,
\begin{equation*}
(\,\iscripthat\circ\jscripthat\,)_{(x,y)}=\iscripthat _x\circ\jscripthat _y=(A_x^*A_x)\circ (B_y^*B_y)=(A_x^*A_x)^{1/2}B_y^*B_y(A_x^*A_x)^{1/2}
\end{equation*}
and these are not equal, in general.\hfill\qedsymbol
\end{exam}

\begin{lem}    
\label{lem32}
{\rm{(i)}}\enspace If $\iscript$ and $\jscript$ are semi-trivial with states $\alpha _x$, observable $A=\brac{a_x}$ and states $\beta _y$, observable
$B=\brac{b_y}$, respectively, then $\iscript\circ\jscript$ is semi-trivial with states $\beta _y$ and observable $\brac{\rmtr (\alpha _xb_y)a_x}$. Also,
$(\jscript\mid\iscript )$ is semi-trivial with states $\beta _y$ and observable $\brac{\sum\limits _x\rmtr (\alpha _x\beta _y)a_x}$.
{\rm{(ii)}}\enspace If $\iscript$ and $\jscript$ are trivial with state $\alpha$, observable $\brac{a_x}$ and state $\beta$, observable $\brac{b_y}$, respectively, the $\iscript\circ\jscript$ is trivial with state $\beta$ and observable $\brac{\rmtr (\alpha b_y)a_x}$. Also, $(\jscript\mid\iscript )$ is trivial with state $\beta$ and observable $\brac{\rmtr (\alpha b_y)I}$.
\end{lem}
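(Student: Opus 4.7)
My plan is direct substitution: both parts unpack by taking the definitions $(\iscript\circ\jscript)_{(x,y)}(\rho)=\jscript_y(\iscript_x(\rho))$ and $(\jscript\mid\iscript)_y(\rho)=\jscript_y(\iscriptbar(\rho))$, plugging in the semi-trivial forms $\iscript_x(\rho)=\rmtr(\rho a_x)\alpha_x$ and $\jscript_y(\rho)=\rmtr(\rho b_y)\beta_y$, and using linearity of each operation to pull scalar factors through. I will do (i) first and then obtain (ii) as the specialization in which all $\alpha_x$ coincide with a single $\alpha$ and all $\beta_y$ with a single $\beta$.

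For the sequential product in (i), after substitution the scalar $\rmtr(\rho a_x)$ pulls outside $\jscript_y$, which then acts on $\alpha_x$ to produce $\rmtr(\alpha_x b_y)\beta_y$; reassembling and recognizing the product of two scalars as $\rmtr(\rho\cdot\rmtr(\alpha_x b_y)a_x)$ exhibits the semi-trivial form with state $\beta_y$ and effect $\rmtr(\alpha_x b_y)a_x$. The observable condition $\sum_{x,y}\rmtr(\alpha_x b_y)a_x=I$ then reduces via $\sum_y b_y=I$ and $\rmtr(\alpha_x)=1$ to $\sum_x a_x=I$. For $(\jscript\mid\iscript)$ the same manipulation applied to $\iscriptbar(\rho)=\sum_x\rmtr(\rho a_x)\alpha_x$ yields the claimed effect $\sum_x\rmtr(\alpha_x b_y)a_x$ paired with state $\beta_y$, with the same normalization check.

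Part (ii) falls out of (i) by specialization, except that the conditioned-instrument formula simplifies further because $\iscriptbar(\rho)=\sum_x\rmtr(\rho a_x)\alpha=\rmtr(\rho)\alpha$ collapses once the state factor $\alpha$ is common to every outcome. This leaves $\jscript_y(\iscriptbar(\rho))=\rmtr(\rho)\rmtr(\alpha b_y)\beta$, which I rewrite as $\rmtr(\rho\cdot\rmtr(\alpha b_y)I)\beta$ to present it in semi-trivial form with effect $\rmtr(\alpha b_y)I$. That rewriting — forcing the identity $I$ into the effect slot so that the state factor $\rmtr(\rho)$ becomes a genuine $\rmtr(\rho c_y)$ pairing — is the one point that is not purely mechanical, and it is the reason the conditioned effect in (ii) carries an identity factor rather than a combination of the $a_x$; elsewhere the proof is bookkeeping with linearity and the trace.
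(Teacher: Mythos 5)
Your proof is correct and takes essentially the same route as the paper's: direct substitution of the semi-trivial forms into $(\iscript\circ\jscript)_{(x,y)}(\rho)=\jscript_y\sqbrac{\iscript_x(\rho)}$ and $(\jscript\mid\iscript)_y(\rho)=\jscript_y\sqbrac{\,\iscriptbar(\rho)}$, pulling scalars through by linearity to exhibit the semi-trivial form, with (ii) obtained by specializing (i). Your extra normalization check that the resulting effects sum to $I$, and your explicit rewriting of the conditioned effect as $\rmtr(\alpha b_y)I$ in (ii), are details the paper leaves implicit (it just writes that (ii) follows from (i), where setting $\alpha_x=\alpha$ gives $\sum_x\rmtr(\alpha b_y)a_x=\rmtr(\alpha b_y)I$ directly); note also that your effect $\sum_x\rmtr(\alpha_x b_y)a_x$ silently corrects a typo in the paper's statement, which prints $\rmtr(\alpha_x\beta_y)$ where its own computation gives $\rmtr(\alpha_x b_y)$.
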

\begin{proof}
(i)\enspace Since
\begin{align*}
(\iscript\circ\jscript )_{(x,y)}(\rho )&=\jscript _y\sqbrac{\iscript _x(\rho )}=\jscript _y\sqbrac{\rmtr (\rho a_x)\alpha _x}
   =\rmtr (\rho a_x)\rmtr (\alpha _xb_y)\beta _y\\
   &=\rmtr\sqbrac{\rho\,\rmtr (\alpha _xb_y)a_x}\beta _y
\end{align*}
We conclude that $\iscript\circ\jscript$ is semi-trivial with states $\beta _y$ and observable $\rmtr (\alpha _xb_y)a_x$. Moreover, since
\begin{equation*}
(\jscript\mid\iscript )_y(\rho )=\jscript _y\sqbrac{\,\iscriptbar\,(\rho )}=\jscript _y\sqbrac{\sum _x\rmtr (\rho a_x)\alpha _x}
  =\rmtr\sqbrac{\rho\sum _x\rmtr (\alpha _xb_y)a_x}\beta _y
\end{equation*}
we conclude that $(\jscript\mid\iscript )$ is semi-trivial with states $\beta _y$ and observable $\brac{\sum\limits _x\rmtr (\alpha _x\beta _y)a_x}$.
(ii) follows from (i).
\end{proof}

We have seen that the sequential product of trivial (semi-trivial) instruments is trivial (semi-trivial). Also, the sequential product of two Kraus instruments $\iscript _x(\rho )=A_x\rho A_x^*$ and $\jscript _y(\rho )=B_y\rho B_y^*$ is Kraus because
\begin{equation*}
\iscript _x\circ\jscript _y(\rho )=\jscript _y\paren{\iscript _x(\rho )}=B_yA_x\rho A_x^*B_y^*=B_yA_x\rho (B_yA_x)^*
\end{equation*}
so the Kraus operator for $\iscript _x\circ\jscript _y$ is $B_yA_x$. However, the sequential  product of two L\"uders instruments
$\iscript _x(\rho )=a_x^{1/2}\rho a_x^{1/2}$, $\jscript _y(\rho )=b_y^{1/2}\rho b_y^{1/2}$ given by
\begin{equation*}
\iscript\circ\jscript _y=a_x^{1/2}b_y^{1/2}\rho b_y^{1/2}a_x^{1/2}
\end{equation*}
need not be L\"uders.

\begin{lem}    
\label{lem33}
If $A=\brac{a_x\colon x\in\Omega _A}$ and $B=\brac{b_y\colon y\in\Omega _B}$ are observables, then
\begin{equation*}
(\lscript ^B\mid\lscript ^A)^\wedge=\paren{(\lscript ^B)^\wedge\mid (\lscript ^A)^\wedge}=(B\mid A)
\end{equation*}
\end{lem}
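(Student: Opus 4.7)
The plan is to reduce the statement to Theorem~\ref{thm31}(iii) by unwinding the definition of conditioning and using the fact that the map $^\wedge$ is additive on operations. The middle equality in the lemma is essentially notational: by the definition of a L\"uders instrument we already have $(\lscript^A)^\wedge = A$ and $(\lscript^B)^\wedge = B$, so only the outer equality $(\lscript^B \mid \lscript^A)^\wedge = (B \mid A)$ requires work.

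First I would unpack conditioning: by definition,
\begin{equation*}
(\lscript^B \mid \lscript^A)_y = \sum_{x \in \Omega_A} \lscript^A_x \circ \lscript^B_y = \sum_{x \in \Omega_A} (\lscript^A \circ \lscript^B)_{(x,y)}.
\end{equation*}
Applying $^\wedge$ and using that $^\wedge$ preserves finite sums of operations (this was the key identity $(\iscript+\jscript)^\wedge = \iscripthat + \jscripthat$ used throughout Section~1), I get
\begin{equation*}
(\lscript^B \mid \lscript^A)_y^\wedge = \sum_{x \in \Omega_A} (\lscript^A \circ \lscript^B)_{(x,y)}^\wedge.
\end{equation*}
Now Theorem~\ref{thm31}(iii) gives $(\lscript^A \circ \lscript^B)^\wedge = A \circ B$, which component-wise reads $(\lscript^A \circ \lscript^B)_{(x,y)}^\wedge = a_x \circ b_y$. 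Substituting back and recognizing the definition of $(B \mid A)_y$ finishes the job:
\begin{equation*}
(\lscript^B \mid \lscript^A)_y^\wedge = \sum_{x \in \Omega_A} a_x \circ b_y = (B \mid A)_y.
\end{equation*}

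As a sanity check that does not rely on Theorem~\ref{thm31}(iii), the same conclusion can be obtained by direct trace computation: for every $\rho\in\sscript(H)$,
\begin{equation*}
\rmtr\sqbrac{(\lscript^B \mid \lscript^A)_y(\rho)} = \sum_x \rmtr\paren{b_y^{1/2} a_x^{1/2} \rho a_x^{1/2} b_y^{1/2}} = \sum_x \rmtr\paren{\rho\, a_x\circ b_y} = \rmtr\sqbrac{\rho (B\mid A)_y},
\end{equation*}
and uniqueness of the measured effect identifies $(\lscript^B \mid \lscript^A)_y^\wedge$ with $(B \mid A)_y$.

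There is no real obstacle here; the only thing to be careful about is not confusing the order of $\iscript$ and $\jscript$ in the definition of $(\jscript\mid\iscript)_y = \sum_x \iscript_x\circ\jscript_y$ (the conditioning instrument applies $\iscript$ first and then $\jscript$), which is precisely why the companion identity $(\lscript^A\circ\lscript^B)^\wedge=A\circ B$ from Theorem~\ref{thm31}(iii) — rather than the ill-behaved $(\iscript\circ\lscript^A)^\wedge$ from Example~6 — applies cleanly.
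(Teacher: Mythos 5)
Your proposal is correct, and its primary route is genuinely different from the paper's. The paper proves Lemma~\ref{lem33} by exactly the direct trace computation you offer as a sanity check: for every $\rho\in\sscript(H)$ it unwinds $(\lscript^B\mid\lscript^A)_y(\rho)=\lscript_y^B\paren{\,\overline{\lscript^A}(\rho)}$, computes $\sum_x\rmtr(b_y^{1/2}a_x^{1/2}\rho a_x^{1/2}b_y^{1/2})=\sum_x\rmtr(\rho\,a_x\circ b_y)=\rmtr\sqbrac{\rho(B\mid A)_y}$, and concludes by uniqueness of the measured effect. Your main argument instead reduces the lemma to previously established results: you decompose $(\lscript^B\mid\lscript^A)_y=\sum_x(\lscript^A\circ\lscript^B)_{(x,y)}$, invoke additivity of ${}^\wedge$ on sums of operations (valid, since $\rmtr\sqbrac{(\iscript+\jscript)(\rho)}=\rmtr\sqbrac{\rho(\iscripthat+\jscripthat)}$ for all states, as used in Theorem~\ref{thm11}), and then apply Theorem~\ref{thm31}(iii) componentwise. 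This is sound and buys modularity: the lemma becomes a formal corollary of Theorem~\ref{thm31}(iii) with no new computation. What the paper's proof buys is self-containedness; it does not route through Theorem~\ref{thm31}(ii)--(iii), whose own proof is essentially the same trace manipulation, so in the end the two arguments bottom out in the same calculation. Your closing remark about ordering is also the right point of care: conditioning $(\jscript\mid\iscript)$ applies $\iscript$ first, and here the first factor $\lscript^A$ is L\"uders, which is precisely what makes Theorem~\ref{thm31}(ii)/(iii), rather than the failure exhibited in Example 6, applicable; and the middle equality of the lemma is indeed purely notational via $(\lscript^A)^\wedge=A$ and $(\lscript^B)^\wedge=B$.
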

\begin{proof}
For all $\rho\in\sscript (H)$ we obtain
\begin{align*}
\rmtr\sqbrac{\rho (\lscript ^B\mid\lscript ^A)_y^\wedge}&=\rmtr\sqbrac{(\lscript ^B\mid\lscript ^A)_y(\rho )}
   =\rmtr\sqbrac{\lscript _y^B\paren{\,{\overline{\lscript ^A}}(\rho )}}=\rmtr\sqbrac{\lscript _y^B\paren{\sum _xa_x^{1/2}\rho a_x^{1/2}}}\\
   &=\sum _x\rmtr\sqbrac{\lscript _y^B(a_x^{1/2}\rho a_x^{1/2})}=\sum _x\rmtr (b_y^{1/2}a_x^{1/2}\rho a_x^{1/2}b_y^{1/2})\\
   &=\sum _x\rmtr (\rho a_x\circ b_y)=\rmtr\sqbrac{\sum _x(a_x\circ b_y)}=\rmtr\sqbrac{\rho (B\mid A)_y}
\end{align*}
It follows that $(\lscript ^B\mid\lscript ^A)_y^\wedge =(B\mid A)_y$ as hence, the result holds.
\end{proof}

\begin{exam}{9}  
This example shows that $(\jscript\mid\iscript )^\wedge\ne (\,\jscripthat\mid\iscripthat\,)$, in general. Letting $\iscript _x(\rho )=\rmtr (\rho a_x)\alpha$,
$\jscript _y(\rho )=\rmtr (\rho b_y)\beta$ be trivial instruments, we obtain
\begin{equation*}
\rmtr\sqbrac{\rho (\jscript\mid\iscript )_y^\wedge}=\rmtr\sqbrac{\jscript _y\paren{\,\iscriptbar (\rho )}}
   =\rmtr\sqbrac{\rmtr\paren{\,\iscriptbar (\rho )b_y}\beta}=\rmtr\sqbrac{\,\iscriptbar (\rho )b_y}=\rmtr (\alpha b_y)
\end{equation*}
On the other hand,
\begin{equation*}
\rmtr\sqbrac{\rho (\,\jscripthat\mid\iscripthat\,)_y}=\rmtr\sqbrac{\rho (B\mid A)_y}=\rmtr\sqbrac{\rho\sum _x(a_x\circ b_y)}
\end{equation*}
These do not agree, in general. For example, if $\Omega _\iscript =\brac{x}$ so that $\iscript _x(\rho )=\alpha$ for all $\rho\in\sscript (H)$, then we obtain 
\begin{equation*}
\rmtr\sqbrac{\rho (\,\jscripthat\mid\iscripthat\,)_y}=\rmtr (\rho b_y)\ne\rmtr (\alpha b_y)
\end{equation*}
in general.\hfill\qedsymbol
\end{exam}

\section{Mixed Sequential Products and Coexistence}  
We have defined sequential products of effects (observables) and of operations (instruments). We now define mixed sequential products for effects with operations and observables with instruments. If $a\in\escript (H)$ and $\iscript\in\oscript (H)$, we define $a\circ\iscript =\lscript ^a\circ\iscript$. Thus,
$a\circ\iscript\in\oscript (H)$ and $(a\circ\iscript )(\rho )=\iscript (a^{1/2}\rho a^{1/2})$ for all $\rho\in\sscript (H)$. Notice that $a\circ I=\lscript ^a$ and
$I\circ\iscript =\iscript$. Also, it is easy to check that $a\circ\iscript$ is additive and affine in the second argument but is not in the first. If
$\iscript\in\oscript (H)$ has Kraus decomposition $\iscript (\rho )=\sum B_i\rho B_i^*$ and $a\in\escript (H)$, we define $\iscript\circ a\in\escript (H)$ by $\iscript\circ a=\sum B_i^*aB_i$. This definition does not depend on the particular Kraus operators because if $\iscript$ also has the decomposition $\iscript (\rho )=\sum C_i\rho C_i^*$, then for every $\rho\in\sscript (H)$ we obtain
\begin{align*}
\rmtr\paren{\rho\sum B_i^*aB_i}&=\sum\rmtr (\rho B_i^*aB_i)=\sum\rmtr (B_i\rho B_i^*a)=\rmtr\paren{\sum B_i\rho B_i^*a}\\
   &=\rmtr\paren{\sum C_i\rho C_i^*a}=\rmtr\paren{\rho\sum C_i^*aC_ii}
\end{align*}
Hence, $\sum B_i^*aB_i=\sum C_i^*aC_i$. It is easy to check that $\iscript\circ a$ is additive and affine in both variables.

We now extend these definitions to observables and instruments. If $A=\brac{a_x\colon x\in\Omega _A}$ is an observable and $\iscript\in\rmin (H)$, we define $A\circ\iscript\in\rmin (H)$ to have outcome space $\Omega _{A\circ\iscript}=\Omega _A\times\Omega _\iscript$ and
\begin{equation*}
(A\circ\iscript )_{(x,y)}(\rho )=(a_x\circ\iscript _y)(\rho )=(\lscript ^{a_x}\circ\iscript _y)(\rho )=\iscript _y\paren{\lscript ^{a_x}(\rho )}
   =\iscript _y(a_x^{1/2}\rho a_x^{1/2})
\end{equation*}
It is then natural to define $(\iscript\mid A)\in\rmin (H)$ to have $\Omega _{(\iscript\mid A)}=\Omega _\iscript$ and
\begin{align*}
(\iscript\mid A)_y(\rho )&=\sum _x(A\circ\iscript )_{(x,y)}(\rho )=\sum _x\iscript _y(a_x^{1/2}\rho a_x^{1/2})
   =\iscript _y\paren{\sum _xa_x^{1/2}\rho a_x^{1/2}}\\
   &=\iscript _y\paren{\,{\overline{\lscript ^A}}(\rho )}=\paren{\,{\overline{\lscript ^A}}\circ\iscript _y}(\rho )
\end{align*}
We thus define $(\iscript\mid A)_y={\overline{\lscript ^A}}\circ\iscript _y$. If $\iscript\in\rmin (H)$ has Kraus decomposition
$\iscript _x(\rho )=\sum B_i^x\rho (B_i^x)^*$ and $A=\brac{a_x\colon x\in\Omega _A}\in\rmob (H)$ we define $\iscript\circ A\in\rmob (H)$ to have outcome space $\Omega _{\iscript\circ A}=\Omega _\iscript\times\Omega _A$ and
\begin{equation*} 
(\iscript\circ A)_{(x,y)}=\iscript _x\circ a_y=\sum (B_i^x)^*a_yB_i^x
\end{equation*}
It is then natural to define $(A\mid\iscript )\in\rmob (H)$ to have $\Omega _{(A\mid\iscript )}=\Omega _A$ and
\begin{equation*}
(A\mid\iscript )_y=\sum _x(\iscript\circ A)_{(x,y)}=\sum _{x,i}(B_i^x)^*a_yB_i^x=\iscriptbar\circ a_y
\end{equation*}

\begin{thm}    
\label{thm41}
{\rm{(i)}}\enspace If $a\in\escript (H)$ and $\iscript\in\oscript (H)$, then $(a\circ\iscript )^\wedge =a\circ\iscripthat$.
{\rm{(ii)}}\enspace If $A\in\rmob (H)$ and $\iscript\in\rmin (H)$, then $(A\circ\iscript )_{(x,y)}^\wedge =a_x\circ\iscripthat _y=(A\circ\iscripthat )_{(x,y)}$.
{\rm{(iii)}}\enspace If $A\in\rmob (H)$, $\iscript\in\rmin (H)$, then $(\iscript\mid A)^\wedge =(\,\iscripthat\mid A)$.
\end{thm}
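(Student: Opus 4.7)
The plan is to handle each part by a direct trace computation that exploits the defining property $\rmtr(\rho\,\iscripthat\,)=\rmtr[\iscript(\rho)]$ of the effect associated to an operation, together with cyclicity of the trace. The three parts are nested: (i) is the atomic fact, (ii) is just (i) applied componentwise, and (iii) is obtained from (ii) by summing over the first index.

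For part (i), I would unfold $a\circ\iscript=\lscript^a\circ\iscript$, which by definition gives $(a\circ\iscript)(\rho)=\iscript(a^{1/2}\rho a^{1/2})$. Taking the trace and using that $\rmtr[\iscript(\sigma)]=\rmtr(\sigma\,\iscripthat\,)$ for every positive $\sigma$, I get
\begin{equation*}
\rmtr[(a\circ\iscript)(\rho)]=\rmtr(a^{1/2}\rho a^{1/2}\iscripthat\,)=\rmtr(\rho\,a^{1/2}\iscripthat\,a^{1/2})=\rmtr[\rho(a\circ\iscripthat\,)]
\end{equation*}
for every $\rho\in\sscript(H)$. By uniqueness of the effect measured by an operation, this forces $(a\circ\iscript)^\wedge=a\circ\iscripthat$.

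For part (ii), the instrument component is by definition $(A\circ\iscript)_{(x,y)}=a_x\circ\iscript_y$, so part (i) immediately yields $(A\circ\iscript)_{(x,y)}^\wedge=a_x\circ\iscripthat_y$. Recognising the right-hand side as the $(x,y)$-component of the sequential product of the observables $A$ and $\iscripthat$ gives the second equality $a_x\circ\iscripthat_y=(A\circ\iscripthat\,)_{(x,y)}$.

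For part (iii), I would start from the definition $(\iscript\mid A)_y=\overline{\lscript^A}\circ\iscript_y$, so $(\iscript\mid A)_y(\rho)=\sum_x\iscript_y(a_x^{1/2}\rho a_x^{1/2})$. Taking traces and using part (i) term by term,
\begin{equation*}
\rmtr[(\iscript\mid A)_y(\rho)]=\sum_x\rmtr[\rho(a_x\circ\iscripthat_y)]=\rmtr\!\left[\rho\sum_x(a_x\circ\iscripthat_y)\right]
\end{equation*}
for all $\rho\in\sscript(H)$. Since $(\iscripthat\mid A)_y=\sum_x a_x\circ\iscripthat_y$ by the definition of the observable conditioned by $A$, uniqueness of the measured effect gives $(\iscript\mid A)_y^\wedge=(\iscripthat\mid A)_y$, as required. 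I do not anticipate any real obstacle; the main point is simply to keep straight the several overloaded meanings of $\circ$ (effect-effect, operation-operation, effect-operation) and to invoke the defining identity of $\iscripthat$ at the right moment.
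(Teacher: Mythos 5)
Your proposal is correct and follows essentially the same route as the paper: unfold $(a\circ\iscript)(\rho)=\iscript(a^{1/2}\rho a^{1/2})$, apply the defining identity $\rmtr[\iscript(\sigma)]=\rmtr(\sigma\,\iscripthat\,)$ together with cyclicity of the trace, then obtain (ii) componentwise and (iii) by summing over $x$. The only cosmetic difference is that in (iii) the paper cites Theorem~\ref{thm31}(ii) for the per-component identity $(\lscript^{a_x}\circ\iscript_y)^\wedge=a_x\circ\iscripthat_y$, whereas you invoke your part (i) directly; since $a\circ\iscript$ is defined as $\lscript^a\circ\iscript$, these are the same computation.
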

\begin{proof}
(i)\enspace For all $\rho\in\sscript (H)$ we have that
\begin{align*}
\rmtr\sqbrac{\rho (a\circ\iscript )^\wedge}&=\rmtr\sqbrac{(a\circ\iscript )(\rho )}=\rmtr\sqbrac{\iscript (a^{1/2}\rho a^{1/2})}
   =\rmtr(a^{1/2}\rho a^{1/2}\iscripthat\,)\\
   &\rmtr (\rho a^{1/2}\iscripthat a^{1/2})=\rmtr (\rho a\circ\iscripthat\,)
\end{align*}
and the result follows.
(ii)\enspace Since $(A\circ\iscript )_{(x,y)}=a_x\circ\iscript _y$, by (i) we obtain
\begin{equation*}
(A\circ\iscript )_{(x,y)}^\wedge =(a_x\circ\iscript _y)^\wedge=a_x\circ\iscripthat _y=(A\circ\iscripthat\,)_{(x,y)}
\end{equation*}
(iii)\enspace Applying Theorem~\ref{thm31}(ii) we have for all $\rho\in\sscript (H)$ that
\begin{align*}
\rmtr\sqbrac{\rho (\iscript\mid A)_y^\wedge}&=\rmtr\sqbrac{(\iscript\mid A)_y(\rho )}=\rmtr\sqbrac{(\,{\overline{\lscript ^A}}\circ\iscript _y)(\rho )}
   =\sqbrac{\rho (\lscript ^A\circ\jscript _y)^\wedge}\\
   &=\rmtr\sqbrac{\rho\paren{\sum _x\lscript ^{a_x}\circ\iscript _y}^\wedge}=\sum _x\rmtr\sqbrac{\rho (\lscript ^{a_x}\circ\iscript _y)^\wedge}\\
   &=\sum _x\rmtr\sqbrac{\rho (a_x\circ\iscripthat _y)}=\rmtr\sqbrac{\rho\sum _x(a_x\circ\iscripthat _y)}=\rmtr\sqbrac{\rho (\,\iscripthat\mid A)_y}
\end{align*}
We conclude that $(\iscript\mid A)^\wedge =(\,\iscripthat\mid A)$.
\end{proof}

We have seen that if $A,B\in\rmob (H)$, then $A\circ\lscript ^B=\lscript ^A\circ\lscript ^B$. On the other hand
\begin{equation*}
(\lscript ^A\circ B)_{(x,y)}=\lscript _x^A\circ b_y=a_x^{1/2}b_ya_x^{1/2}=a_x\circ b_y=(A\circ B)_{(x,y)}
\end{equation*}
Hence, $\lscript ^A\circ B=A\circ B$. We now treat trivial and semi-trivial instruments. 

\begin{thm}    
\label{thm42}
{\rm{(i)}}\enspace Let $a\in\escript (H)$ and let $\iscript$ be a trivial operation $\iscript (\rho )=\rmtr (\rho b)\alpha$. Then
$\iscript\circ A=\rmtr (\alpha a)b$ and $a\circ\iscript$ is trivial with state $\alpha$ and effect $a\circ b$.
{\rm{(ii)}}\enspace Let $A=\brac{a_x\colon x\in\Omega _A}$ be an observable and let $\iscript$ be semi-trivial with states $\alpha _y$ and observable
$B$. Then $(\iscript\circ A)_{(x,y)}=\rmtr (\alpha _xa_y)b_x$ and $A\circ\iscript$ is semi-trivial with states $\alpha _y$ and observable $A\circ B$.
{\rm{(iii)}}\enspace Let $A=\brac{a_x\colon x\in\Omega _A}$ be an observable and let $\iscript$ be trivial with state $\alpha$ and observable $B$. Then
$(\iscript\circ a)_{(x,y)}=\rmtr (\alpha a_y)b_x$ and $A\circ\iscript$ is trivial with state $\alpha$ and observable $A\circ B$.
\end{thm}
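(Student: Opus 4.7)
The plan is to reduce all three parts to the identity $\rmtr\sqbrac{\rho(\iscript\circ a)} = \rmtr\sqbrac{\iscript(\rho)a}$, which appeared in the paragraph establishing well-definedness of $\iscript\circ a$, together with the direct formula $(a\circ\iscript)(\rho)=\iscript(a^{1/2}\rho a^{1/2})$ from the definition of $a\circ\iscript$.

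For part (i), I first compute $\iscript\circ a$. Substituting the trivial form $\iscript(\rho)=\rmtr(\rho b)\alpha$ into the trace identity gives $\rmtr\sqbrac{\rho(\iscript\circ a)}=\rmtr(\rho b)\rmtr(\alpha a)=\rmtr\sqbrac{\rho\cdot\rmtr(\alpha a)b}$ for every $\rho\in\sscript(H)$, so $\iscript\circ a=\rmtr(\alpha a)b$. For $a\circ\iscript$, the one-line computation $(a\circ\iscript)(\rho)=\iscript(a^{1/2}\rho a^{1/2})=\rmtr(\rho\,a^{1/2}ba^{1/2})\alpha=\rmtr\sqbrac{\rho(a\circ b)}\alpha$ exhibits it as a trivial operation with state $\alpha$ and effect $a\circ b$.

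Part (ii) is nothing more than part (i) applied component by component. Since each $\iscript_x(\rho)=\rmtr(\rho b_x)\alpha_x$ is itself a trivial operation, the first half of (i) immediately yields $(\iscript\circ A)_{(x,y)}=\iscript_x\circ a_y=\rmtr(\alpha_x a_y)b_x$, and the second half gives $(A\circ\iscript)_{(x,y)}(\rho)=\rmtr\sqbrac{\rho(a_x\circ b_y)}\alpha_y$, displaying $A\circ\iscript$ as semi-trivial with states $\alpha_y$ and observable $A\circ B=\brac{a_x\circ b_y}$. Part (iii) is then the specialization in which all $\alpha_y$ coincide with a single $\alpha$, which collapses the semi-trivial form to the trivial one.

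There is no substantive obstacle; the main thing to watch is the index bookkeeping, since $\iscript\circ A$ and $A\circ\iscript$ live on outcome spaces $\Omega_\iscript\times\Omega_A$ and $\Omega_A\times\Omega_\iscript$ in opposite order, and each component effect $b_x$ must remain paired with its own state $\alpha_x$ throughout.
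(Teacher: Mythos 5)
Your proposal is correct and follows essentially the same route as the paper: compute $\iscript\circ a$ via the trace identity $\rmtr\sqbrac{\rho\,(\iscript\circ a)}=\rmtr\sqbrac{\iscript(\rho)a}$ (which the paper re-derives from a Kraus decomposition inside the proof rather than citing the well-definedness paragraph, but it is the same cyclic-trace computation), then obtain $a\circ\iscript$ by the one-line substitution $(a\circ\iscript)(\rho)=\iscript(a^{1/2}\rho a^{1/2})=\rmtr\sqbrac{\rho\,(a\circ b)}\alpha$, and deduce (ii) componentwise and (iii) as the special case of (ii).
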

\begin{proof}
(i)\enspace Letting $\iscript$ have Kraus decomposition $\iscript (\rho )=\sum A_i\rho A_i^*$ with $\sum A_i^*A_i\le I$ we have for all $\rho\in\sscript (H)$ that
\begin{align*}
\rmtr (\rho\iscript\circ a)&=\rmtr\paren{\rho\sum A_i^*aA_i}=\sum\rmtr (\rho A_i^*aA_i)=\sum\rmtr (A_i\rho A_i^*a)\\
   &=\rmtr\paren{\sum A_i\rho A_i^*a}=\rmtr\sqbrac{\iscript (\rho )a}=\rmtr\sqbrac{\rmtr (\rho b)\alpha a}\\
   &=\rmtr (\rho b)\rmtr (\alpha a)=\rmtr\sqbrac{\rho\rmtr (\alpha a)b}
\end{align*}
Hence, $\iscript\circ a=\rmtr (\alpha a)b$. Moreover, for all $\rho\in\sscript (H)$ we obtain
\begin{align*}
(a\circ\iscript )(\rho )&=\iscript (a^{1/2}\rho a^{1/2})=\rmtr (a^{1/2}\rho a^{1/2}b)\alpha =\rmtr (\rho a^{1/2}ba^{1/2})\alpha\\
   &=\rmtr (\rho a\circ b)\alpha
\end{align*}
Thus, $a\circ\iscript$ is trivial with state $\alpha$ and effect $a\circ b$.\newline
(ii)\enspace Letting $\iscript _x$ have Kraus decomposition $\iscript _x(\rho )=\sum A_i^x\rho (A_i^x)^*$, we have from (i) that
\begin{equation*}
\rmtr\sqbrac{\rho (\iscript\circ A)_{(x,y)}}=\rmtr (\rho\iscript _x\circ a_y)=\rmtr\sqbrac{\rmtr (\alpha _xa_y)b_x}
\end{equation*}
Therefore, $(\iscript\circ A)_{(x,y)}=\rmtr (\alpha _xa_y)b_x$. Moreover, by (i) we have for all $\rho\in\sscript (H)$ that
\begin{equation*}
(A\circ\iscript )_{(x,y)}(\rho )=(a_x\circ\iscript _y)(\rho )=\rmtr (\rho a_x\circ b_y)\alpha _y
\end{equation*}
Hence, $A\circ\iscript$ is semi-trivial with states $\alpha _y$ and observable $(A\circ B)_{(x,y)}=a_x\circ b_y$.
(iii) follows from (ii).
\end{proof}

Let $\iscript\in\oscript (H)$ be a Kraus with $\iscript (\rho )=S\rho S^*$ and let $a\in\escript (H)$. Then $\iscript\circ a=S^*aS$ and
\begin{equation*}
(a\circ\iscript )(\rho )=\iscript (a^{1/2}\rho a^{1/2})=Sa^{1/2}\rho a^{1/2}S^*
\end{equation*}
Thus, $a\circ\iscript\in\oscript (H)$ is Kraus with Kraus operator $Sa^{1/2}$. More generally, if $\iscript\in\rmin (H)$ is Kraus with
$\iscript _x(\rho )=S_x\rho S_x^*$ and $A=\brac{a_y\colon y\in\Omega _A}\in\rmob (H)$, then
\begin{align*}
(\iscript\circ A)_{(x,y)}&=\iscript _x\circ a_y=S_x^*a_yS_x\\
\intertext{and}
(A\circ\iscript )_{(x,y)}(\rho )&=(a_y\circ\iscript _x)(\rho )=\iscript _x(a_y^{1/2}\rho a _y^{1/2})=S_xa_y^{1/2}\rho a_y^{1/2}S_x^*\\
\end{align*}
Thus, $A\circ\iscript\in\rmin (H)$ is Kraus with Kraus operators $S_xa_y^{1/2}$.

\begin{exam}{10}  
If $\iscript\in\oscript (H)$ is a channel, it is easy to check that $J(a)=\iscript\circ a$ is a convex, effect algebra morphism on $\escript (H)$. We now show that $J$ need not be a monomorphism. That is, if $J(a)\perp J(b)$, then we need not have $a\perp b$. Let $\brac{\psi _1,\psi _2}$ be an orthonormal basis for $\complex ^2$ and let $a_1=\ket{\psi _1}\bra{\psi _1}$, $a_2=\ket{\psi _2}\bra{\psi _2}$. Then $\iscript (\rho )=a_1\rho a_1+a_2\rho a_2$ is a channel. Letting $d=\ket{\psi _1+\psi _2}\bra{\psi _1+\psi _2}$ we have that
\begin{equation*}
d=2\ket{\frac{\psi _1+\psi _2}{\sqrt{2}}}\bra{\frac{\psi _1+\psi _2}{\sqrt{2}}}
\end{equation*}
Hence, $d$ is twice a one-dimensional projection so $d\not\le I$. Letting $a=b=\tfrac{1}{2}d$ we have that $a,b\in\escript (H)$ and $a+b=d\not\le I$ so
$a\not\perp b$. However,
\begin{align*}
J(a)+J(b)&=J(d)\\
  &=\ket{\psi _1}\bra{\psi _1}\,\ket{\psi _1+\psi _2}\bra{\psi _1+\psi _2}\,\ket{\psi _1}\bra{\psi _1}\\
  &\qquad +\ket{\psi _2}\bra{\psi _2}\,\ket{\psi _1+\psi _2}\bra{\psi _1+\psi _2}\,\ket{\psi _2}\bra{\psi _2}\\
   &=\ket{\psi _1}\bra{\psi _1}+\ket{\psi _2}\bra{\psi _2}=I
\end{align*}
so $J(a)\perp J(b)$.\hfill\qedsymbol
\end{exam}

An observable $B=\brac{b_y\colon y\in\Omega _B}$ is \textit{part} \cite{fhl18} of an observable $A=\brac{a_x\colon x\in\Omega _A}$ if there exists a surjection $f\colon\Omega _A\to\Omega _B$ such that
\begin{equation*}
b_y=A_{f^{-1}(y)}=\sum\brac{a_x\colon f(x)=y}
\end{equation*}
We then write $B=f(A)$. Two observables $B,C$ \textit{coexist} \cite{bgl95,hz12,hrsz09} if there exists an observable $A$ such that $B=f(A)$, $C=g(A)$. Thus, $B$ and $C$ coexist if they can be measured by applying a single observable $A$. If $A,B\in\rmob (H)$ with
$A=\brac{a_x\colon x\in\Omega _A}$, $B=\brac{b_y\colon y\in\Omega _B}$, define $f\colon\Omega _A\times\Omega _B\to\Omega _B$ by $f(x,y)=y$. Then
\begin{equation*}
(B\mid A)_y=\sum _xa_x\circ b_y=\sum _x(A\circ B)_{(x,y)}=\sum\brac{(A\circ B)_{(x,y)}\colon f(x,y)=y}
\end{equation*}
Therefore, $(B\mid A)_y=(A\circ B)_{f^{-1}(y)}$ so $(B\mid A)=f(A\circ B)$. We conclude that $(B\mid A)$ and $A\circ B$ coexist. Also, $A$ and
$A\circ B$ coexist because $a_x=\sum _ya_x\circ b_y$. Thus, $(B\mid A)$ and $A$ coexist.

If $\iscript ,\jscript\in\rmin (H)$, we have the instrument $\iscript\circ\jscript$ given by
 $(\iscript\circ\jscript )_{(x,y)}(\rho )=\jscript _y\paren{\iscript _x(\rho )}$ with channel ${\overline{\iscript\circ\jscript}}=\iscriptbar\circ\jscriptbar$ and instrument $(\jscript\mid\iscript )$ given by $(\jscript\mid\iscript )_y(\rho )=\jscript _y\paren{\,\iscriptbar (\rho )}$ with channel
${\overline{(\jscript\mid\iscript )}}=\iscriptbar\circ\jscriptbar$. We say that $\jscript$ is \textit{part} of $\iscript$ if there exists a surjection
$f\colon\Omega _\iscript\to\Omega _\jscript$ such that
\begin{equation*}
\jscript _y=\iscript _{f^{-1}(y)}=\sum\brac{\iscript _x\colon f(x)=y}
\end{equation*}
We then write $\jscript =f(\iscript )$. As with observables, we say that $\jscript ,\kscript\in\rmin (H)$ \textit{coexist} if there exists an instrument $\iscript$ such that $\jscript =f(\iscript )$, $\kscript =g(\iscript )$. Moreover, we have that $(\jscript\mid\iscript )$ and $\iscript\circ\jscript$ coexist. However,
$\iscript _x\ne\sum _y\iscript _x\circ\jscript _y$ in general, so $(\jscript\mid\iscript )$ and $\iscript$ may not coexist.

\begin{lem}    
\label{lem43}
{\rm{(i)}}\enspace $f(\iscript )^\wedge =f(\,\iscripthat\,)$.
{\rm{(ii)}}\enspace If $\jscript$, $\kscript$ coexist, the $\jscripthat$, $\kscripthat$ coexist.
\end{lem}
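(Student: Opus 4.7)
The plan is to reduce both parts to the additivity of the hat map, which is already granted by Theorem~\ref{thm11}. The only facts I need are that for operations the hat satisfies $(\iscript+\jscript)^\wedge = \iscripthat + \jscripthat$, and the definitions of $f(\iscript)$ for an instrument and $f(A)$ for an observable as fiberwise sums.

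For part (i), I would unfold the definitions directly. Fix $y\in\Omega_{f(\iscript)}$. By definition of an instrument being pushed through $f$,
\begin{equation*}
f(\iscript)_y \;=\; \sum_{x\in f^{-1}(y)} \iscript_x,
\end{equation*}
so applying the hat and using additivity (Theorem~\ref{thm11}) gives
\begin{equation*}
\bigl(f(\iscript)\bigr)^{\wedge}_y \;=\; \Bigl(\sum_{x\in f^{-1}(y)} \iscript_x\Bigr)^{\!\wedge} \;=\; \sum_{x\in f^{-1}(y)} \iscripthat_x \;=\; f(\iscripthat)_y,
\end{equation*}
which is exactly the $y$-th effect of $f(\iscripthat)$ as an observable obtained from $\iscripthat$. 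Doing this for every $y$ yields $f(\iscript)^{\wedge}=f(\iscripthat)$. One should also note in passing that $f(\iscript)$ really is an instrument (its total is $\iscriptbar$, still a channel) and $f(\iscripthat)$ really is an observable (its effects sum to $I$), so both sides of the claimed equality live in the expected spaces.

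Part (ii) is then immediate from (i): assume $\jscript = f(\iscript)$ and $\kscript = g(\iscript)$ for some instrument $\iscript$ and surjections $f,g$ on $\Omega_\iscript$. By (i),
\begin{equation*}
\jscripthat = f(\iscript)^{\wedge} = f(\iscripthat), \qquad \kscripthat = g(\iscript)^{\wedge} = g(\iscripthat),
\end{equation*}
so $\iscripthat$ is an observable witnessing the coexistence of $\jscripthat$ and $\kscripthat$.

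There is essentially no obstacle; the main point is to check that the hat commutes with the fiber sum defining $f(\iscript)$, and this is nothing more than the additivity part of the isomorphism in Theorem~\ref{thm11} applied one index at a time. The only small bookkeeping is to keep the roles of the instrument-hat (which yields an observable) and the observable-hat (component-wise) straight.
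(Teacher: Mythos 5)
Your proof is correct and follows essentially the same route as the paper: both reduce part (i) to the fiberwise additivity of the hat map (the paper just re-derives that additivity inline via the trace identity $\rmtr[\rho\,\iscripthat\,]=\rmtr[\iscript(\rho)]$ rather than citing Theorem~\ref{thm11}), and part (ii) is the same immediate consequence in both.
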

\begin{proof}
(i)\enspace For all $\rho\in\sscript (H)$ we have that
\begin{align*}
\rmtr\sqbrac{\rho f(\iscript _y^\wedge}&=\rmtr\sqbrac{f(\iscript )_y(\rho )}=\rmtr\sqbrac{\iscript _{f^{-1}(y)}(\rho )}
   =\rmtr\sqbrac{\sum _{f(x)=y}\iscript _x(\rho )}\\
   &=\sum _{f(x)=y}\rmtr\sqbrac{\iscript _x(\rho )}=\sum _{f(x)=y}\rmtr(\rho\iscripthat _x)=\rmtr\paren{\rho\sum _{f(x)=y}\iscripthat _x}
   =\rmtr\sqbrac{\rho f(\,\iscripthat\,)_y}
\end{align*}
Hence, $f(\iscript )^\wedge =f(\,\iscripthat\,)$.
(ii)\enspace If $\jscript$, $\kscript$ coexist, the $\jscript =f(\iscript )$, $\kscript =g(\iscript )$ for some $\iscript\in\rmin (H)$. By (i) we obtain
$\jscripthat =f(\,\iscripthat\,)$ and $\kscripthat =g(\,\iscripthat\,)$ so $\jscripthat$ and $\kscripthat$ coexist.
\end{proof}

\begin{lem}    
\label{lem44}
{\rm{(i)}}\enspace If $\iscript _x(\rho )=\rmtr (\rho a_x)\alpha$ is trivial and $\jscript =f(\iscript )$, then $\jscript$ is trivial with state $\alpha$ and observable $f(A)$.
{\rm{(ii)}}\enspace If $\jscript$, $\kscript$ are trivial with the same state $\alpha$ and $\jscripthat$, $\kscripthat$ coexist, then $\jscript$, $\kscript$ coexist.
\end{lem}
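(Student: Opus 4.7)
The plan is to handle (i) by direct computation from the definition of $f(\iscript)$, and then use (i) together with the construction of a trivial instrument out of an arbitrary observable to leverage coexistence of the observables into coexistence of the instruments.

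For (i), I would expand $\jscript_y(\rho) = \iscript_{f^{-1}(y)}(\rho) = \sum_{f(x)=y}\iscript_x(\rho)$ using the triviality of $\iscript$ and linearity of the trace:
\begin{equation*}
\jscript_y(\rho) = \sum_{f(x)=y}\rmtr(\rho a_x)\alpha = \rmtr\paren{\rho\sum_{f(x)=y}a_x}\alpha = \rmtr\sqbrac{\rho f(A)_y}\alpha,
\end{equation*}
where $A=\brac{a_x\colon x\in\Omega_\iscript}=\iscripthat$. This shows $\jscript$ is trivial with state $\alpha$ and observable $f(A)$, as required; note in particular that $\jscripthat = f(A) = f(\,\iscripthat\,)$, consistent with Lemma~\ref{lem43}(i).

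For (ii), write $\jscript_y(\rho) = \rmtr(\rho b_y)\alpha$ and $\kscript_z(\rho) = \rmtr(\rho c_z)\alpha$, so $\jscripthat = B = \brac{b_y}$ and $\kscripthat = C = \brac{c_z}$. Since $B$ and $C$ coexist, there exists an observable $A = \brac{a_x\colon x\in\Omega_A}$ together with surjections $f\colon\Omega_A\to\Omega_B$ and $g\colon\Omega_A\to\Omega_C$ such that $B=f(A)$ and $C=g(A)$. I would then introduce the trivial instrument $\iscript$ with state $\alpha$ and observable $A$, that is, $\iscript_x(\rho)=\rmtr(\rho a_x)\alpha$. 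By (i), $f(\iscript)$ is trivial with state $\alpha$ and observable $f(A)=B$, and $g(\iscript)$ is trivial with state $\alpha$ and observable $g(A)=C$.

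The key observation — and the only place where any thought is really needed — is that a trivial instrument is uniquely determined by its state and its observable: if two trivial instruments have state $\alpha$ and the same effects, they agree on every $\rho$ by the formula $\rho\mapsto\rmtr(\rho a_x)\alpha$. Hence $f(\iscript)=\jscript$ and $g(\iscript)=\kscript$ as instruments, so $\jscript$ and $\kscript$ coexist via $\iscript$. I do not anticipate a serious obstacle; the slight subtlety is just keeping the observable-level coexistence (witnessed by $A$) separate from the instrument-level coexistence we must produce (witnessed by $\iscript$), and recognizing that the shared state $\alpha$ is exactly what allows a single instrument $\iscript$ to simultaneously recover both $\jscript$ and $\kscript$ under $f$ and $g$.
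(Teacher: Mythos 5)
Your proposal is correct and follows essentially the same route as the paper: part (i) by the identical direct computation, and part (ii) by building the trivial instrument $\iscript _x(\rho )=\rmtr (\rho a_x)\alpha$ from the observable $A$ witnessing coexistence of $\jscripthat$, $\kscripthat$ and then invoking (i). The uniqueness observation you single out (a trivial instrument is determined by its state and observable) is exactly what the paper's chain of equalities $\jscript _y(\rho )=\rmtr (\rho\jscripthat _y)\alpha =\rmtr\sqbrac{\rho f(A)_y}\alpha =f(\iscript )_y(\rho )$ establishes implicitly.
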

\begin{proof}
(i)\enspace For all $\rho\in\sscript (H)$ we obtain
\begin{equation*}
\jscript _y(\rho )=\sum _{f(x)=y}\iscript _x(\rho )=\sum _{f(x)=y}\rmtr (\rho a_x)\alpha =\rmtr\paren{\rho\sum _{f(x)=y}a_x}\alpha
   =\rmtr\sqbrac{\rho f(A)_y}\alpha
\end{equation*}
Hence, $\jscript$ is trivial with state $\alpha$ and observable $f(A)$.
(ii)\enspace Let $\jscript _y(\rho )=\rmtr (\rho b_y)\alpha$, $\kscript _z(\rho )=\rmtr (\rho c_z)\alpha$ be trivial with the same state $\alpha$. Since
$\jscript$, $\kscript$ coexist there exists an observable $A=\brac{a_x\colon x\in\Omega _A}$ such that $\jscripthat =f(A)$, $\kscripthat =g(A)$. Letting
$\iscript\in\rmin (H)$ be defined by $\iscript _x(\rho )=\rmtr (\rho a_x)\alpha$ we obtain from (i) that
\begin{align*}
\jscript _y&=\rmtr (\rho\jscripthat _y)\alpha =\rmtr\sqbrac{\rho f(A)_y}\alpha =f(\iscript )_y(\rho )\\
\intertext{and}
\kscript _z(\rho )&=\rmtr (\rho\kscripthat _z)\alpha =\rmtr\sqbrac{\rho g(A)_z}\alpha =g(\iscript )_z(\rho )\\
\end{align*}
Hence, $\jscript =f(\iscript )$, $\kscript =g(\iscript )$ so $\jscript$, $\kscript$ coexist
\end{proof}


\begin{thebibliography}{99}
\bibitem{bgl95}P.~Busch, M.~Grabowski and P.~Lahti, \textit{Operational Quantum Physics}, Springer-Verlag, Berlin,
 1995.
\bibitem{fhl18}S.~Fillipov, T.~Heinosaari and L.~Lepp\"aj\"arvi, Simulability of observables in general probabilistic theories, \textit{Phys.~Rev.}\textbf{A97}, 062102 (2018).
\bibitem{gg02}S.~Gudder and R.~Greechie, Sequential Products on effect algebras, \textit{Rep.~Math.~Phys.}
\textbf{49}, 87--111 (2002).
\bibitem{gn01}S.~Gudder and G.~Nagy, Sequential quantum measurements, \textit{J.~Math.~Phys.}
\textbf{42}, 5212--5222 (2001).
\bibitem{gud120}S.~Gudder, Quantum instruments and conditioned observables, arXiv:quant-ph 2005.08117 (2020).
\bibitem{gud220}--------, Parts and composites of quantum systems, arXiv:quant-ph 2009.07371 (2020).
\bibitem{gud320}--------, Combinations of quantum observables and instruments, arXiv:quant-ph 2010.08025 (2020)
\bibitem{hz12}T.~Heinosaari and M.~Ziman, \textit{The Mathematical Language of Quantum Theory}, Cambridge University Press, Cambridge, 2012.
\bibitem{hrsz09}T.~Heinosaari, D.~Reitzner, R.~Stano and M.~Ziman, Coexistence of quantum operations, \textit{J.~Phys.} \textbf{A42}, 365302 (2009).
\bibitem{kra83}K.~Kraus, \textit{States, Effects and Operations}, Springer-Verlag, Berlin, 1983.
\bibitem{lah03}P.~Lahti, Coexistence and joint measurability in quantum mechanics, \textit{Int.~J.~Theor.~Phys.} \textbf{42}, 893--906 (2003).
\bibitem{lud51}G.~L\"uders, \"Uber due Zustands\"anderung durch den Messprozess, \textit{Ann.~Physik}
\textbf{6}, 322--328 (1951).
\bibitem{nc00}M.~Nielson and I.~Chuang, Quantum Computation and Quantum Information, Cambridge University Press, Cambridge, 2000.

\end{thebibliography}
\end{document}